\let\orgdescriptionlabel\descriptionlabel
\renewcommand*{\descriptionlabel}[1]{%
  \let\orglabel\label
  \let\label\@gobble
  \phantomsection
  \edef\@currentlabel{#1}%
  \let\label\orglabel
  \orgdescriptionlabel{#1}%
}
\newtheorem{fact}{Fact}
\newtheorem{obs}{Observation}
\newcommand{\Log}{\mbox{{\sf L}}}
\newcommand{\ShP}{\mbox{{\sf \#P}}}
\newcommand{\NP}{\mbox{{\sf NP}}}
\newcommand{\Gl}{\mbox{{\sf GapL}}}
\newcommand{\calC}{\mbox{${\cal C}$}}
\begin{document}
\title{Tree-width and Logspace: \\ Determinants and Counting Euler Tours}
\author{Nikhil Balaji \& Samir Datta}
\institute{
  Chennai Mathematical Institute (CMI), India\\
  \email{\{nikhil,sdatta\}@cmi.ac.in}
}
\maketitle
\begin{abstract}
Motivated by the recent result of \cite{EJT} showing that $\mathsf{MSO}$ properties
are Logspace computable on graphs of bounded tree-width, we consider the complexity
of computing the determinant of the adjacency matrix of a bounded tree-width graph
and prove that it is \Log-complete. It is important to notice that the determinant 
is neither an $\mathsf{MSO}$-property nor counts the number of solutions of an
$\mathsf{MSO}$-predicate.

We extend this technique to count the number of spanning arborescences and 
directed Euler tours in bounded tree-width digraphs, and further to counting 
the number of spanning trees and the number of Euler tours in undirected graphs,
all in \Log. Notice that undirected Euler tours are not known to be 
$\mathsf{MSO}$-expressible and the corresponding counting problem is
 in fact $\ShP$-hard for general graphs. Counting undirected Euler tours in 
bounded tree-width graphs was not known to be polynomial time computable
till very recently Chebolu et al \cite{CCMtw} gave a polynomial time algorithm
for this problem (concurrently and independent of this work).

Finally, we also show some linear algebraic extensions of the determinant 
algorithm to show how to compute the characteristic polynomial and trace of
the powers of a bounded tree-width graph in \Log.
\end{abstract}
\section{Introduction}
The determinant is a fundamental algebraic invariant of a matrix.
For a $n \times n$ matrix $A$ the determinant is given by the expression
\[
 \mbox{Det}(A) = \sum_{\sigma \in S_n} \mbox{sign}(\sigma) \prod_{i \in [n]} a_{i,\sigma(i)}
\]
where $S_n$ is the symmetric group on $n$ elements, $\sigma$ is a permutation
from $S_n$ and $\mbox{sign}(\sigma)$ is the number of inversions in $\sigma$. 
Even though the summation in the definition runs over $n!$ many terms, there are many
efficient sequential \cite{vzGG} and parallel \cite{Berk} algorithms for computing the determinant.

Apart from the inherently algebraic methods to compute the determinant there
is also a combinatorial algorithm (see, for instance, Mahajan and Vinay \cite{MV}) 
which interprets the determinant as a signed sum of cycle covers in the weighted
adjacency matrix of a graph. \cite{MV} are thus able to give another proof
of the \Gl-completeness of the determinant, a result first proved by 
Toda \cite{Toda91}.
For a more complete discussion on the known algorithms for the determinant, see \cite{MV}.


Armed with this combinatorial interpretation
 (and the fact that the determinant is $\Gl$-complete),
one can ask if the determinant is any easier if the underlying matrix represents simpler
classes of graphs. Datta, Kulkarni, Limaye, Mahajan \cite{DKLM} study the complexity of the determinant 
and permanent
respectively, when the underlying graph is planar and show that they are as 
hard as the general case, i.e., respectively $\Gl$ and $\ShP$-hard. 
We revisit these questions in the context of bounded tree-width graphs.
%

Many $\NP$-complete graph problems become tractable when restricted to 
graphs of bounded treewidth. In an influential paper, Courcelle \cite{Cou} proved
that any property of graphs expressible in Monadic Second Order logic 
can be decided in linear time on bounded treewidth
graphs. For example, Hamiltonicity is an $\mathsf{MSO}$ property and hence deciding if a 
bounded treewidth graph has a Hamiltonian cycle can be done in linear time.
More recently Elberfeld, Jakoby, Tantau \cite{EJT} showed that in fact, $\mathsf{MSO}$ 
properties  on bounded treewidth graphs can be decided in \Log. 

We study the Determinant problem when the underlying directed graph has 
bounded treewidth and show that the determinant computation for this case is in
logspace. As a corollary we also compute other linear algebraic invariants of matrix,
viz. the characteristic polynomial, rank and trace of powers of the matrix in logspace. 
We also give a tight bound on the complexity of the problem by showing that it is $\Log$-
hard via a reduction from directed reachability is paths.

\subsection{Spanning Trees}

The parallel complexity of the problem for planar graphs was investigated by
Braverman, Kulkarni and Roy \cite{BKR}, who give tight bounds on the complexity of the problem,
both in general and in the modular setting. They show that the problem is $\Log$-complete 
when the modulus is $2^k$, for constant $k$ and  for any other modulus and in the non-modular
case, the problem is shown to be as hard in the planar case as for the case of arbitrary graphs.
We consider the bounded treewidth case and show that the problem is in $\Log$ for exactly counting
the number of spanning trees in contrast to \cite{BKR}.

\subsection{Euler Tours}
An Euler tour of a graph is a walk on the graph that traverses every
edge in the graph exactly once. Given a graph, deciding if there is an
Euler tour of the graph is quite simple. Indeed, the famous Konisberg
bridges problem that founded graph theory is just a question
of existence of Euler tours on these bridges. Euler settled in the negative
and in the process gave a necessary and sufficient condition for a graph to
be \textit{Eulerian}(A graph is Eulerian if and only if all the vertices 
are of even degree). This gives a simple algorithm to check if a graph is
Eulerian. 

An equally natural question is to ask for the number of distinct Euler 
tours in a graph. For the case of directed graphs, the {\bf BEST} theorem due
to De Bruijn, Ehrenfest, Smith and Tutte gives an exact formula that gives
the number of Euler tours in a directed graph (see Fact \ref{fact:best})
which yields a polynomial time algorithm via a determinant computation.
For undirected graphs, no such closed form expression is known and the
computational problem is \ShP-complete\cite{BW}. In fact, the problem
is \ShP-complete even when restricted to $4$-regular planar graphs
\cite{GS}. So exactly computing the number of Euler tours is not in
polynomial time unless $\ShP = \mathsf{P}$.

This is an instance of an interesting phenomenon in the complexity of 
counting problems : the decision version of the problem is tractable and
the counting version is intractable. The flagship example of this phenomenon
is the Perfect Matching problem for which there is a polynomial time algorithm
\cite{Edm}, whereas the counting version is \ShP-complete via a reduction from
the permanent. Faced with this adversity, traditionally there are two directions
that have been pursued in the community:

\begin{enumerate}
 \item Can one find an good \textit{approximate} solution in polynomial time?
  \footnote{For the Perfect Matching problem, a long line of work culminating in the 
  beautiful algorithm due to Jerrum and Sinclair \cite{JS89a,JS89b} gives a Fully
  Polynomial Randomized Approximation Scheme(FPRAS). But these techniques have not 
  yielded an FPRAS for counting Euler tours.} This problem is wide open for counting
  Euler Tours. 
  \item Find restricted classes of graphs for which one can count exactly
 the number of Euler tours in polynomial time. 
\end{enumerate}

Previously Chebolu, Cryan, Martin  have given a polynomial time algorithm for 
counting Euler tours in undirected series-parallel graphs\cite{CCMsp}.

\subsection{Our Techniqes and Results}

We show that the following can be computed in \Log:
\begin{enumerate}
 \item The Determinant of an $(n \times n)$ matrix $A$ whose underlying 
 undirected graph has bounded treewidth. As a corollary we can also compute
 the coefficients of the characteristic polynomial of a matrix.
 \item Given an $(n \times n)$ matrix $A$ whose underlying 
 undirected graph has bounded treewidth, compute the trace of $A^k$.
 \item Counting the number of Spanning Trees in graphs of bounded treewidth.
 \item Counting the number of Euler tours in a directed graph where
 the underlying undirected graph is bounded treewidth.
 \item Counting the number of Euler tours in a undirected bounded 
 treewidth graph.
\end{enumerate}

At the core of our result is our algorithm to compute the determinant
by writing down an $\mathsf{MSO}_2$ formula that evaluates to true on
every valid cycle cover of the bounded treewidth graph underlying $A$. 
The crucial point being that the cycle covers are parameterised on a 
quantity closely related to the sign of the cycle covers. This makes it possible
to invoke the cardinality version of Courcelle's theorem(for logspace) due 
to \cite{EJT} to compute the determinant. We use this determinant algorithm  
and the BEST theorem to count directed Euler tours. Using the algorithm for
directed Eulerian tours, we get an algorithm to count the number of Euler 
tours in an undirected bounded treewidth graph. Concurrently and independent of
this work, using different techniques, Chebolu, Cryan, Martin \cite{CCMtw} have 
given a polynomial 
time algorithm for counting Euler tours in undirected bounded treewidth graphs. 

\subsection{Organization of the paper}

Section \ref{sec:prelim} introduces some notation and terminology required for the rest of the paper.
In Section \ref{sec:btwDet}, we give a Logspace algorithm to compute the Determinant of 
matrices of bounded treewidth. In Section \ref{sec:appLA}, we give some applications of the determinant
algorithm to computing certain linear algebraic invariants of a matrix. In Section \ref{sec:appG}, we
give some applications of the determinant algorithm to some counting problems on bounded treewidth
graphs. In Section \ref{sec:concl}, we summarize our contributions and mention some interesting open
questions raised by this paper.

\section{Preliminaries}\label{sec:prelim}

\begin{definition}\label{def:tdecomp}
Given an undirected graph $G = (V_G, E_G)$ a tree decomposition of $G$ is a tree
$T = (V_T, E_T)$(the vertices in $V_T \subseteq 2^{V_G}$ are called \textit{bags}), 
such that 
\begin{enumerate}
 \item Every vertex $v \in V_G$ is present in at least one bag, i.e., 
 $\cup_{X \in V_T} X = V_G$.
 \item If $v \in V_G$ is present in bags $X_i, X_j \in V_T$, then
 $v$ is present in every bag $X_k$ in the unique path between $X_i$
 and $X_j$ in the tree $T$.
 \item For every edge $(u, v) \in E_G$, there is a bag $X_r \in V_T$ such that
 $u, v \in X_r$.
\end{enumerate}
The width of a tree decomposition is the $\max_{X \in V_T} (|X| - 1)$. The tree width of
a graph is the minimum width over all possible tree decomposition of the graph.
\end{definition}

\begin{definition}\label{def:ccover}
Given a weighted directed graph $G = (V, E)$ by its adjacency matrix $[a_{ij}]_{i,j\in[n]}$,
a \textit{cycle cover} $\calC$ of $G$ is a set of cycles that cover $G$. 
I.e., $\calC = \{C_1, C_2, \ldots, C_k\}$, where $C_i = \{c_{i_1}, \ldots, c_{i_r}\} \subseteq V$ such that
$(c_{i_1}, c_{i_2}), (c_{i_2}, c_{i_3}), \ldots, (c_{i_{r-1}}, c_{i_r}), (c_{i_r}, c_{i_1}) \in E$ and 
$\cup_{i=1}^k C_i = V$. The weight of the cycle $C_i = \prod_{j \in [r]} \mbox{wt}(a_{ij})$
and the weight of the cycle cover $\mbox{wt}(\calC) = \prod_{i \in [k]} \mbox{wt}(C_i)$. The sign of
the cycle cover $\calC$ is $(-1)^{n+k}$.
\end{definition}


Every permutation $\sigma \in S_n$ can be written as a union of disjoint cycles. Hence a permutation  
corresponds to a cycle cover of a graph on $n$ vertices. In this light, the determinant of an $(n \times n)$
matrix $A$ can be seen as a signed sum of cycle covers:
\[
 \mbox{Det}(A) = \sum_{\mbox{cycle cover\ } \calC} \mbox{sign}(\calC) \mbox{wt}(\calC)
\]


For the definitions of Monadic Second Order logic ($\mathsf{MSO}$) and related terminology we refer the reader to 
Section $2$ of \cite{EJT}.

%

\begin{theorem}{\cite{EJT}}
\label{prop:EJTB}
For every $k \ge 1$, there is a logspace DTM that on input of any graph $G$ of tree width at
most $k$ outputs a width-$k$ tree decomposition of $G$.
\end{theorem}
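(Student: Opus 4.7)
The plan is to give a logspace algorithm that simulates Bodlaender's linear-time tree-decomposition procedure for bounded-treewidth graphs. I would begin by establishing the key combinatorial fact that every graph $G$ with treewidth at most $k$ contains a balanced separator $S$ of size at most $k+1$, i.e., a vertex subset whose removal leaves every connected component with at most $2|V_G|/3$ vertices. This follows from picking a centroidal bag of any width-$k$ tree decomposition of $G$ and invoking the bag-connectivity property from Definition \ref{def:tdecomp}.

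With the balanced-separator property in hand, the decomposition is built top-down and recursively: the root bag is a balanced separator $S$ of $G$, and for each connected component $C$ of $G - S$ we recursively decompose the induced subgraph $G[C \cup S]$ and attach its root as a child of the overall root. Correctness (width $\le k$, coverage of edges, connectedness of the bags containing each vertex) is routine. The depth of the resulting tree is $O(\log n)$ because each recursive call shrinks the instance by a factor of $3/2$, which is exactly the feature that makes a logspace implementation plausible in the first place.

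To obtain a logspace implementation without maintaining an explicit recursion stack, I would identify each bag of the output decomposition by the $O(\log n)$-bit sequence of component-choices along its root-to-node path. On a query for a particular bag, the relevant induced subgraph and its separator are reconstructed on-the-fly by composing logspace subroutines, trading recursion for recomputation in the usual way. Reachability within $G-S'$ for an intermediate separator $S'$ is itself an \NL{} question, but the separators we track have bounded size and the components can be enumerated canonically, so careful bookkeeping keeps the workspace logarithmic.

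The main obstacle is constructing the balanced separator $S$ itself in logspace, since we cannot bootstrap with MSO-in-logspace (that is what this theorem is being used to enable). The approach is to adapt Bodlaender's \emph{improved graph} reductions: locate all simplicial vertices of degree at most a constant depending on $k$ and contract them in parallel. Each such round is locally checkable and can be carried out in \ACz, hence in logspace; one proves that after $O(\log n)$ rounds the graph either collapses entirely (whereupon the trace of reductions reconstructs a tree decomposition directly) or exposes a small balanced separator of the required form. Verifying that every reduction step preserves treewidth and that only logarithmically many rounds are needed is the technically heaviest part of the argument, and is where most of the work in \cite{EJT} is actually spent.
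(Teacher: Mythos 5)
The paper offers no proof of this statement at all --- it is imported wholesale from \cite{EJT} --- so the real question is whether your sketch reconstructs that (long and delicate) argument, and it does not. Two gaps are fatal. First, the recursive balanced-separator construction does not produce a width-$k$ decomposition. For the bags containing a vertex of the separator $S$ to induce a connected subtree, $S$ must be forced into the root bag of each child's recursive decomposition, so bag sizes grow by $k+1$ per level; even with the standard fix of re-separating the accumulated boundary set, one obtains width $O(k)$ (say $4k+3$), never $k$. Declaring ``width $\le k$'' routine conceals the entire second half of the theorem: \cite{EJT} first compute an approximate decomposition of width $O(k)$ and then reduce it to width exactly $k$ by expressing optimal tree decompositions in $\mathsf{MSO}$ (following Bodlaender--Kloks) and bootstrapping through the logspace Courcelle theorem evaluated over the approximate decomposition. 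Second, the logspace implementation does not close. Your bag addresses are sequences of component choices along root-to-node paths: with $O(\log n)$ levels and up to $n$ components per level this is $\Theta(\log^2 n)$ bits, and reconstructing the subinstance at a node needs the union of all separators above it, which is again $\Theta(\log^2 n)$ bits if stored. The sentence ``careful bookkeeping keeps the workspace logarithmic'' is precisely the technical heart of \cite{EJT} --- replacing the recursion stack by bags that each carry a self-contained $O(\log n)$-bit descriptor whose tree adjacency is decidable locally in logspace --- and you assert the conclusion rather than supply the mechanism.

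A smaller point: the last paragraph has the difficulty inverted. For constant $k$, finding a single balanced separator of size $k+1$ in logspace is the easy step --- enumerate all $\binom{n}{k+1}$ candidate sets and check, for each, that every connected component of $G-S$ has at most $2n/3$ vertices using Reingold's logspace algorithm for undirected connectivity. The simplicial-vertex reduction rules you invoke belong to Bodlaender's linear-time algorithm and its parallel variants, not to a logspace separator search, and the claim that $O(\log n)$ such rounds either collapse the graph or ``expose a balanced separator'' is not a correct account of any of those algorithms.
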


\begin{theorem}{\cite{EJT}}
\label{prop:EJTC}
For every $k \ge 1$ and every $\mathsf{MSO}$-formula $\phi$, there is a logspace DTM that on input of
any logical structure $\mathcal{A}$ of tree width at most $k$ decides whether $A \vDash \phi$ holds.
\end{theorem}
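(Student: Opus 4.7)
The plan is to follow the structure of the classical Courcelle theorem proof, replacing every ``linear-time'' step by a ``logspace'' step. First, given a structure $\mathcal{A}$ of tree width at most $k$, I would invoke Theorem~\ref{prop:EJTB} to obtain a width-$k$ tree decomposition $T$ of $\mathcal{A}$ in logspace. Then, by a standard translation depending only on $\phi$ and $k$ (and so contributing no input-dependent cost), the $\mathsf{MSO}$-formula $\phi$ is compiled into a deterministic bottom-up tree automaton $B_{\phi,k}$ of constant size that accepts the suitably labeled decomposition tree if and only if $\mathcal{A}\vDash\phi$. The automaton's states encode, for each subformula of $\phi$ and each ``type'' of the bag's vertices, the truth pattern of the subformula restricted to the subtree processed so far; since both the formula and the bag size are fixed, the state alphabet is of constant size.

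The heart of the argument is to evaluate $B_{\phi,k}$ on $T$ in logspace. I would proceed in two stages. First, transform $T$ in logspace into a binary, balanced tree decomposition $T'$ of depth $O(\log n)$, while increasing the width only by a constant factor; classical tree-balancing techniques (tree contraction, rake-and-compress) can be implemented in logspace when applied carefully to bounded-width decompositions, after first converting $T$ into a ``nice'' decomposition with introduce/forget/join nodes. Second, with $T'$ of logarithmic depth and $B_{\phi,k}$ having constant-size states, a textbook bottom-up evaluation via depth-first traversal of $T'$ fits in $O(\log n)$ space: the recursion stack has depth $O(\log n)$ and each frame stores only $O(1)$ bits.

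The main obstacle is the logspace tree-balancing step: one must take a decomposition that may be arbitrarily deep and output a balanced one using only $O(\log n)$ workspace, while preserving the covering, running-intersection and edge-witness properties of a valid decomposition and increasing the width only by a constant. An alternative that sidesteps explicit balancing is recursion-on-demand in the spirit of Lindell's logspace tree-isomorphism algorithm: whenever the evaluation needs the automaton state of a node, one recursively recomputes it from the states of its children, carefully reusing workspace so that only $O(\log n)$ bits are live at any time. In either variant, the crucial ingredient is that the automaton state has size depending only on $\phi$ and $k$, not on $|\mathcal{A}|$, which is precisely what allows the classical Courcelle dynamic program to collapse into a logspace computation.
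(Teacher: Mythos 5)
This statement is not proved in the paper at all: it is imported verbatim from \cite{EJT}, so there is no in-paper proof to compare against. Your outline does follow the strategy of the cited source --- logspace computation of a width-$k$ decomposition, compilation of $\phi$ into a finite-state bottom-up tree automaton, and logspace evaluation of that automaton --- so at the level of architecture you have reconstructed the right proof.

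The genuine gap is the one you yourself flag: everything except the evaluation step is routine, and the evaluation step is the entire technical content of the theorem. Saying that ``classical tree-balancing techniques can be implemented in logspace when applied carefully'' assumes exactly what needs to be proved; rake-and-compress is a parallel (NC) technique, and making it, or any depth-reduction of an arbitrary decomposition tree, run in $O(\log n)$ workspace while preserving the covering, connectivity and edge-witness conditions is the hard part of \cite{EJT}. The Lindell-style alternative as you describe it also does not close the gap: a depth-first evaluation needs to identify the current node in each stack frame, which costs $\Theta(\log n)$ bits per level, not $O(1)$, so on an unbalanced tree the naive recursion uses $\Omega(\mathrm{depth}\cdot\log n)$ space; Lindell's trick of ordering children by subtree size must be invoked in a much more careful form than ``reusing workspace.'' A cleaner way to finish without balancing the decomposition is to observe that evaluating a deterministic bottom-up tree automaton with a constant state set on a bounded-arity labeled tree reduces, via a logspace Euler-tour linearization, to expression evaluation over a fixed finite algebra, which is in $\NCo\subseteq\Log$ by Buss's theorem. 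A second, smaller gap is the phrase ``suitably labeled decomposition tree'': the decomposition from Theorem~\ref{prop:EJTB} has unordered bags of vertices, not symbols from a finite alphabet, and turning it into a bounded-degree tree over a constant-size alphabet that faithfully encodes $\mathcal{A}$ requires the bounded coloring of vertices (so that adjacent bags can communicate which vertices they share by color rather than by name) that the paper describes later for the relation $ION$; without it the automaton's alphabet is not constant and the whole argument collapses.
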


\begin{theorem}{\cite{EJT}}
Let $k \geq 1$ and let $\phi(X_1, \ldots , X_d)$ be an $\mathsf{MSO}-\tau$-formula. 
Then there are an $s \geq 1$, an $\mathsf{MSO}-\tau_{s-tree}$-formula $\psi(X_1, \ldots, X_d )$
and a logspace DTM that on input of any $\tau$-structure $\mathcal{A}$ with universe $A$
and $tw(\mathcal{A}) \leq k$ outputs a balanced binary $s$-tree structure $\mathcal{T}$ such that for all 
indices $i \in \{0, \ldots , |A|\}^d$ we have histogram$(\mathcal{A}, \phi)[i] = \mbox{histogram}
(\mathcal{T} , \psi)[i]$.
\end{theorem}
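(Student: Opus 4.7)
The plan is to reduce the problem to evaluating an $\mathsf{MSO}$-formula on a balanced binary tree, where the balancing is performed in logspace and the translation of the formula preserves not just truth but the cardinalities $|X_1|,\ldots,|X_d|$ of the free set variables. First I would apply Theorem~\ref{prop:EJTB} to obtain, in logspace, some width-$k$ tree decomposition $T$ of the Gaifman graph of $\mathcal{A}$. Using the standard refinement of $T$ into a nice tree decomposition (introduce, forget, join nodes), followed by a logspace tree-contraction/rake-and-compress procedure, I would produce a balanced binary decomposition of depth $O(\log|A|)$. The vertices of this balanced tree, labelled by a fixed-size alphabet recording the bag type and the transition (introduce/forget/join/leaf), then form the universe of the $s$-tree structure $\mathcal{T}$; the signature $\tau_{s\text{-tree}}$ includes the successor relations of the binary tree plus unary predicates for the bag labels. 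This whole construction runs in logspace because each of its steps is known to be so.

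Next I would build $\psi(X_1,\ldots,X_d)$ from $\phi(X_1,\ldots,X_d)$. The idea is standard Feferman--Vaught style composition along the tree: at each node $v$ one maintains, as a second-order variable, the finite set of $\phi$-types of sub-assignments to $X_1,\ldots,X_d$ restricted to the subtree rooted at $v$, together with the colouring of the at most $k+1$ boundary vertices in the current bag. Because the number of possible types is bounded (depending only on $\phi$ and $k$), this type computation is itself expressible in $\mathsf{MSO}$ over the tree structure $\mathcal{T}$, yielding the desired $\psi$.

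The crucial point for histogram preservation is to fix, once and for all, a \emph{canonical introduction node} for every element $a\in A$; this is well-defined by the connectedness axiom of tree decompositions and is $\mathsf{MSO}$-definable over $\mathcal{T}$. For each free variable $X_j$ of $\phi$, the corresponding variable $X_j$ of $\psi$ is interpreted as the set of canonical introduction nodes of the elements of $X_j$. This bijection is cardinality-preserving, so for every tuple $i\in\{0,\ldots,|A|\}^d$ the number of assignments to $X_1,\ldots,X_d$ satisfying $\phi$ in $\mathcal{A}$ with $|X_j|=i_j$ equals the number of assignments satisfying $\psi$ in $\mathcal{T}$ with the same cardinalities, which is exactly the claimed histogram identity.

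The hard part is carrying out the balancing in logspace while keeping the translation $\phi\mapsto\psi$ uniform and type-preserving: a naive balancing can destroy the ``one bag introduces one element'' invariant needed for the canonical-introduction trick, and it can blow up the depth beyond $O(\log|A|)$, which would break the logspace bound on the $s$-tree output. Handling this requires a careful combination of Bodlaender's logspace tree-decomposition algorithm with a tree-contraction that only fuses bags agreeing on their boundary types, so that the composition lemma for $\phi$-types continues to apply after contraction. Once this balancing is in hand, the rest of the argument is a routine inductive verification that $\psi$ computes the correct type at the root and hence the correct histogram.
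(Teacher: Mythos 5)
This theorem is not proved in the paper at all: it is quoted verbatim from Elberfeld, Jakoby and Tantau \cite{EJT} and used as a black box (it is the ``histogram version of Courcelle's theorem'' invoked in the proof of Theorem \ref{thm:btwDet}). There is therefore no proof in the paper against which to compare your argument; what you have written is a reconstruction of an external result.

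As such a reconstruction, your outline does capture the architecture of the EJT argument: compute a width-$k$ tree decomposition in logspace, convert it into a balanced binary tree structure over a fixed signature, translate $\phi$ into a formula $\psi$ over that structure by a type/composition argument, and preserve the histogram by mapping each element of $A$ to a canonical node of the tree so that set cardinalities are preserved. But two points keep this from being a proof. First, the step you yourself flag as ``the hard part'' --- producing a \emph{balanced} (logarithmic-depth) binary tree structure in logspace while keeping the type-composition lemma applicable --- is exactly the technical heart of \cite{EJT}, and you defer it rather than carry it out; without it the claim that $\mathcal{T}$ can be output by a logspace DTM is unsupported. Second, you attribute the logspace tree decomposition to ``Bodlaender's logspace tree-decomposition algorithm''; Bodlaender's algorithm is linear-time, and its logspace counterpart (Theorem \ref{prop:EJTB}) is itself one of the main contributions of \cite{EJT}, so the argument as written leans on the very paper whose theorem it is trying to reprove. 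A complete proof would also have to verify that the map sending $X_j$ to the set of canonical nodes of its elements induces a \emph{bijection} between satisfying assignments of $\phi$ in $\mathcal{A}$ and satisfying assignments of $\psi$ in $\mathcal{T}$, not merely a cardinality-preserving correspondence on individual sets; this is where the exactness (rather than mere truth-preservation) of the formula translation is needed, and it is asserted rather than established in your sketch.
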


\section{Determinants in Bounded Treewidth Matrices} \label{sec:btwDet}
Given a square $\{0,1\}$-matrix $A$, we can view it as the bipartite adjacency
matrix of a bipartite graph $G(A)$. The permanent of this matrix $A$ counts the 
number of perfect matchings in $G(A)$, while the determinant counts the signed
sum of perfect matchings in $G(A)$.

If $G$ is a bounded treewidth graph then we can count the number of perfect
matchings in $G$ in \Log \cite{EJT} (see also \cite{DDN}). Hence the 
complexity of the permament of $A$, above is well understood in this case 
while the complexity of computing the determinant is not clear. 

On the other hand the determinant of a $\{0,1\}$-matrix reduces (say by
a reduction $g_{MV}$ to counting
the number of paths in another graph (see e.g. \cite{MV}). Also counting
$s,t$-paths in a bounded treewidth graph is again in \Log via \cite{EJT}
(see also \cite{DDN}). But the problem with this approach is that that the
graph $g_{MV}(G)$ obtained by reducing a bounded treewidth $G$ is not
bounded treewidth.

However, we can also view view $A$ as the adjacency matrix of a directed
graph $H(A)$. If $H(A)$ has bounded treewidth (which implies that $G(A)$ also 
has bounded treewidth) then we have a way of computing the determinant of $A$. 
To see this, consider the following lemma:

\begin{lemma} \label{lem:thetaX}
There is an $\mathsf{MSO}_2$-formula $\theta(X)$ with a
free variable $X$ that takes values from the set of subsets of vertices
such that $\theta(S)$ is true exactly when $S$ is the set of heads of a cycle
cover of the given graph.
\end{lemma}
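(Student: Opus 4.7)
The plan is to let $\theta(X)$ existentially guess an edge set $Y$ that forms a cycle cover and then assert that $X$ selects exactly one vertex from each of its cycles. Since $\mathsf{MSO}_2$ permits quantification over edge subsets, the construction can be carried out directly in three subformulas.

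First, I would write an auxiliary subformula $\mathrm{CycCov}(Y)$ saying that every vertex $v$ has exactly one out-edge and exactly one in-edge in $Y$; these are first-order conditions expressed through the usual incidence predicates of $\mathsf{MSO}_2$. When $\mathrm{CycCov}(Y)$ holds, the directed subgraph $(V,Y)$ necessarily decomposes into vertex-disjoint directed cycles whose union is $V$, i.e., a cycle cover in the sense of Definition~\ref{def:ccover}. Second, I would write $\mathrm{Cyc}(C,Y)$ expressing that $C \subseteq V$ is the vertex set of a single cycle of $(V,Y)$: $C$ is nonempty, each $v \in C$ has its (unique) $Y$-out-neighbor inside $C$ (and likewise for the in-neighbor), and no proper nonempty subset of $C$ has this closure property. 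This is the standard $\mathsf{MSO}_2$ way to pick out a connected component, and it works here because the components of $(V,Y)$ under $\mathrm{CycCov}(Y)$ are precisely the cycles. Third, the head condition becomes
\[
\forall C\,\Bigl(\mathrm{Cyc}(C,Y) \rightarrow \exists v\bigl(v\in C\wedge v\in X \wedge \forall u (u\in C \wedge u\in X \rightarrow u=v)\bigr)\Bigr),
\]
which asserts $|C\cap X|=1$ for every cycle $C$.

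Assembling the pieces,
\[
\theta(X) \;:=\; \exists Y\subseteq E\;\Bigl[\,\mathrm{CycCov}(Y) \;\wedge\; \forall C\bigl(\mathrm{Cyc}(C,Y)\rightarrow |C\cap X|=1\bigr)\,\Bigr].
\]
Then $\theta(S)$ is true iff there is a cycle cover of $G$ admitting a transversal whose vertex set is exactly $S$; equivalently, iff $S$ is the set of heads of some cycle cover, with $|S|$ equal to the number $k$ of cycles. This is precisely the parameter needed downstream: once heads are counted, the sign $(-1)^{n+k}$ of the associated cycle cover can be read off from $|S|$, enabling the cardinality version of Courcelle's theorem due to \cite{EJT} to aggregate weighted cycle covers into the determinant.

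The main obstacle is really a modelling one rather than a technical one: the minimality clause in $\mathrm{Cyc}(C,Y)$ must be stated so that $C$ captures exactly one cycle rather than a union of several, since otherwise the count $|S|$ would no longer equal $k$ and the sign bookkeeping would break. This is the usual delicate point in writing connectivity in $\mathsf{MSO}_2$, but standard, and once it is handled the rest of the formula is routine.
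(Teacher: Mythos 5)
Your subformulas $\mathrm{CycCov}(Y)$ and $\mathrm{Cyc}(C,Y)$ are fine, but the head condition is where the proposal breaks, and the break is fatal to both the lemma as stated and its downstream use. You only require that $X$ meet each cycle of $Y$ in exactly one vertex, i.e.\ that $X$ be an \emph{arbitrary transversal} of the cycles. The paper defines the heads as the \emph{least-numbered} vertex of each cycle, and the lemma demands that $\theta(S)$ hold \emph{exactly} when $S$ is this canonical set. Under your formula, for a single $3$-cycle on $\{1,2,3\}$ all three singletons $\{1\},\{2\},\{3\}$ satisfy $\theta$, whereas only $\{1\}$ is the head set; so your claim that your condition is ``equivalently, iff $S$ is the set of heads of some cycle cover'' is false. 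This also wrecks the counting in Theorem~\ref{thm:btwDet}: the histogram of \cite{EJT} counts satisfying sets $S$ of each cardinality (or pairs $(S,Y)$ if $Y$ is left free), and with transversals the contribution of a cycle cover with cycle lengths $\ell_1,\dots,\ell_k$ is distorted by a factor of $\prod_i \ell_i$ (or different cycle covers get conflated through shared transversals), so the alternating sum over $|X|=k$ no longer computes the determinant.

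The missing idea --- which is in fact the main technical content of the paper's proof --- is how to make ``least-numbered'' $\mathsf{MSO}$-expressible without destroying bounded treewidth. A linear order on $V$ is not part of the vocabulary, and adjoining an arbitrary one as a binary relation can blow up the treewidth of the logical structure. The paper builds a local ``in-order-next'' relation $ION$ from the logspace tree decomposition and the $(2k+2)$-coloring of \cite{EJT}, observes that $ION$ is logspace-computable, that its transitive closure $ION^{*}$ is a total order on $V$, and that adding $ION$ to the structure increases the width only by a constant (since $ION$ only relates vertices in the same or neighboring bags). The head condition is then that the chosen vertex $v\in C\cap X$ satisfies $ION^{*}(v,u)$ for every $u\in C$, which pins down a unique head per cycle. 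To repair your proof you need exactly this machinery; the rest of your formula can stay as written.
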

Note that the heads of a cycle cover are the least numbered vertices from each
of the cycles of the cycle cover.

Let $T$ be a tree decomposition of $G$ and let 
$C: V_G \rightarrow \{1,\ldots,2k+2\}$ be a coloring of the 
vertices of $G$ such that if $x \neq y$ are two vertices of $G$ in the same bag
of $T$ then $C(x) \neq C(y)$. \cite{EJT} proves that such a coloring exists
and is computable in logspace (in fact, they prove this for 
the slightly stronger
property that if $C(x) = C(y)$, the two vertices $x,y$ cannot even be in
neighboring bags).

We define the relation $ION(x,y)$ which is essentially the ``in-order-next''
relation for the tree decomposition where ties are broken according to the
coloring above. More precisely, we define $ION(x,y)$ to be true if precisely
one of the following holds:
\begin{enumerate}
\item $x,y$ are in the same bag for some bag and $C(x) < C(y)$
\item $x,y$ do not occur in the same bag but $x \in B_x, y \in B_y$ where:
\begin{enumerate}
\item Bag $B_x$ is the left child of bag $B_y$ in the tree decomposition $T$
\item Bag $B_y$ is the right child of bag $B_x$ in the tree decomposition $T$
\end{enumerate}
\end{enumerate}

The following is well known and follows from Euler Traversal of a tree 
in logspace:
(see \cite{CookMckenzie}).
\begin{proposition}
$ION$ is computable in logspace.
\end{proposition}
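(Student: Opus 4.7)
The plan is to decide each of the two clauses in the definition of $ION(x,y)$ separately in logspace and take their disjunction. By Theorem \ref{prop:EJTB}, a width-$k$ tree decomposition $T$ of $G$ is logspace-constructible; I may assume $T$ is binary, since converting a width-$k$ tree decomposition to a binary one (adding auxiliary bags that are subsets of existing bags) increases the width by at most a constant and is a standard logspace transformation. The coloring $C$ is already assumed to be logspace-computable, as is the ``in-bag'' membership relation (each bag has at most $k+1$ vertices, so testing membership requires only logarithmic space).

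To check the first clause, the machine iterates over all bags $B\in V_T$, tests whether both $x$ and $y$ lie in $B$, and, whenever one does, compares $C(x)$ with $C(y)$. The entire test is a bounded counter over the bags together with two constant-size membership tests and one comparison, all in logspace.

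To check the second clause, I rely on the Cook--McKenzie fact that the Euler traversal of a tree is computable in logspace; in particular, parent, left-child, and right-child queries on the binary tree $T$ are logspace-decidable. The machine then iterates over all ordered pairs of bags $(B_x,B_y)$ such that $B_x$ is the left child of $B_y$ in $T$, or $B_y$ is the right child of $B_x$ in $T$. Each such test is a local tree-navigation query, and for every qualifying pair we check whether $x\in B_x$ and $y\in B_y$. A positive answer anywhere makes the clause true.

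The main subtlety---really the only one---is ensuring that ``left child'' and ``right child'' are well defined in the output of the EJT construction; this is handled once and for all by fixing an orientation of $T$ induced by the encoding of the decomposition. After this, both clauses are disjunctions of logspace-decidable predicates, so $ION$ itself is logspace-computable.
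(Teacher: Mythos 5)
Your proof is correct and follows essentially the same route as the paper, which simply asserts the proposition as a consequence of logspace Euler traversal of trees (citing Cook--McKenzie) without spelling out the details. You fill in the routine composition-of-logspace-functions argument (bag membership, coloring comparison, and local child queries on the logspace-constructible tree decomposition), which is exactly what the paper's one-line justification implicitly relies on.
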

From the definition of the coloring defined above the following proposition
about the transitive closure $ION^{*}$ of $ION$ is also clear:
\begin{proposition}
$ION^{*}$ is a total order on the vertices of $G$.
\end{proposition}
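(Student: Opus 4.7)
The plan is to exhibit $ION^{*}$ as a concrete total order $\prec$ on $V_G$, obtained by concatenating the in-order enumeration of $T$'s bags with color-based tie-breaking.

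First, I would use an in-order traversal of the (binary) tree decomposition $T$ to fix a linear enumeration $B_1, B_2, \ldots, B_m$ of its bags. For each $v \in V_G$, the bags containing $v$ form a nonempty connected subtree of $T$, so there is a well-defined first bag in the enumeration that contains $v$; call its index $\phi(v)$. Define $u \prec v$ iff $\phi(u) < \phi(v)$, or $\phi(u) = \phi(v)$ and $C(u) < C(v)$. Since any bag contains vertices of pairwise distinct colors, $\prec$ is a strict total order on $V_G$.

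Next, I would prove $ION^{*} = \prec$ by establishing the two inclusions $ION \subseteq \prec$ and $\prec \subseteq ION^{*}$. The inclusion $ION \subseteq \prec$ is by case analysis on the definition: clause 1 of $ION$ (common bag, $C(u) < C(v)$) and clause 2 (left-child-to-parent or parent-to-right-child) each force $u$ before $v$ in the in-order-plus-color ordering, using that in-order traversal visits left subtrees before the root and the root before right subtrees. This containment yields the antisymmetry half of the total-order claim via $ION^{*} \subseteq \prec$.

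For totality, I would establish $\prec \subseteq ION^{*}$ by induction on the gap $\phi(v) - \phi(u)$: given $u \prec v$, walk along the unique path in $T$ from $u$'s canonical bag to $v$'s canonical bag, invoking clause 1 of $ION$ whenever consecutive intermediaries lie in a common bag and clause 2 whenever they straddle a parent-child edge, to assemble an $ION$-chain $u = z_0, z_1, \ldots, z_k = v$. The main obstacle will be this totality step: the in-order traversal can transit between subtrees through common ancestors, and selecting intermediate vertices at each crossing---particularly across tree edges---requires care. Here the stronger coloring property of \cite{EJT} (no two same-colored vertices in neighboring bags) is what guarantees that suitable intermediaries always exist and that the resulting chain strictly advances in $\prec$, ruling out cycles.
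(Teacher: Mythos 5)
The paper itself offers no argument for this proposition (it is asserted to be ``clear'' from the coloring), so your attempt to pin $ION^{*}$ down as an explicit order $\prec$ is doing real work, and the order you construct (first in-order bag, then color) is indeed what the authors intend $ION^{*}$ to be. However, your first inclusion $ION \subseteq \prec$ is false for the definition of $ION$ as actually written, and this is a genuine gap rather than a detail: clause 1 relates $x$ and $y$ whenever they share \emph{any} bag with $C(x) < C(y)$, not only a canonical first bag. Take the root bag $R=\{x,y\}$ with left child $L=\{y,z\}$: the in-order enumeration visits $L$ before $R$, so $\phi(y)<\phi(x)$ and $y \prec x$, yet if $C(x)<C(y)$ clause 1 applied to $R$ gives $ION(x,y)$. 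So $ION \not\subseteq \prec$ and the antisymmetry half of your argument collapses.

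The failure is not only in your proof but in the statement under the paper's literal definition: $ION$ can contain cycles, so $ION^{*}$ need not be antisymmetric at all. With root $R=\{a,b\}$, left child $L=\{b,c\}$, and $C(a)<C(b)<C(c)$ (a legal coloring, since all three vertices lie in the same or neighboring bags), clause 1 gives $ION(a,b)$ in $R$ and $ION(b,c)$ in $L$, while clause 2(a) gives $ION(c,a)$ because $c$ lies in a bag that is the left child of a bag containing $a$; hence $ION^{*}(a,c)$ and $ION^{*}(c,a)$ both hold. Any correct proof must first repair the definition, e.g.\ by restricting clause 1 to the unique bag in which each vertex first occurs in the in-order enumeration (or by passing to a decomposition with unique introduce-bags); after that repair your $\prec$ is exactly the right invariant and both inclusions become provable. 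You should also flag that your totality step $\prec \subseteq ION^{*}$ needs every bag on the connecting tree path to supply an intermediate vertex $z_i$: the stronger coloring property of \cite{EJT} does not guarantee this (an empty bag separating two leaves kills every $ION$-chain between them), so nonemptiness of bags along paths must be arranged or assumed.
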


Finally, the local nature of $ION$ yields that: 
\begin{proposition}
If we add $ION$ to the structure
representing the tree decomposition $T$, its tree width goes up only by a
constant factor. 
\end{proposition}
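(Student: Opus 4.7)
The plan is to exhibit an explicit tree decomposition of the enriched structure whose width is at most a constant factor larger than $k+1$. Starting from the given tree decomposition $T$ of $G$ of width $k$, I would produce a new tree decomposition $T'$ whose bags cover not only the original edges of $G$ but also every pair in the $ION$ relation, and which still has a small tree structure.

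First I would classify the $ION$-pairs according to the two cases in the definition. A pair $(x,y)$ satisfying Case 1 already sits together in some bag of $T$, so these pairs need no additional handling. The work is entirely in Case 2, where $x$ and $y$ belong to bags $B_x,B_y$ that are parent and child of one another in $T$. My plan is therefore to ensure that for every parent--child edge $(B,B')$ of $T$, the set $B\cup B'$ appears as a bag.

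The construction would be the standard subdivision trick: for each edge $(B,B')$ of $T$ insert a new bag $B^{\cup}_{B,B'}=B\cup B'$ on that edge, obtaining a new tree $T'$ on the same vertex set of bags plus these subdivision bags. Clearly every bag has size at most $2(k+1)$, so the width of $T'$ is at most $2k+1$, i.e.\ only a constant factor larger than the width of $T$. I then have to verify that $T'$ is indeed a tree decomposition of $G$ enriched by the $ION$ relation. Coverage of vertices is immediate. Connectivity holds because if a vertex $v$ appeared in both $B$ and $B'$ in $T$, it now also appears in the intermediate bag $B\cup B'$, and for vertices occurring in two non-adjacent bags the old path in $T$ goes through a sequence of parent--child edges, each of whose subdivision bag contains $v$ by the same argument. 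Finally, every original edge of $G$ is covered by an original bag, and every $ION$-pair of Case 2 is by construction contained in the subdivision bag $B_x\cup B_y$.

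The only subtlety is that the structure under consideration encodes $T$ itself (bags and child pointers) as part of its universe, so I have to check that augmenting $T$ by these subdivision nodes does not blow up the tree width of the combined structure either. This is routine, since one can place each new subdivision bag node together with the two bag nodes it sits between into a single bag of a tree decomposition of $T'$, and then combine this with the construction above; I do not expect any obstacle beyond bookkeeping. The core content is the subdivision step and the bound $2(k+1)$, which yields the desired constant factor blow-up.
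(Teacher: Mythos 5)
Your construction is correct and is essentially the paper's argument: the paper simply replaces each bag by the union of itself with its neighboring bags, while you insert the union bag $B\cup B'$ on each tree edge; both coarsen the decomposition locally so that every Case-2 $ION$-pair lands in a common bag, at the cost of a constant-factor blow-up in width. If anything your subdivision variant is marginally cleaner, since the $2(k+1)$ bound does not depend on the decomposition tree having bounded degree, whereas the paper's one-line version implicitly relies on the tree decomposition being (essentially) binary.
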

Since we just need to replace each bag by the union of it with its neighbors
to obtain a tree decomposition such that two vertices related by $ION$ are in 
the same bag.

\begin{proof}{(of Lemma \ref{lem:thetaX})}
We guess a set of edges $Y$ and vertices (heads of a cycle cover) $H$ and 
verify that the subgraph induced by $Y$ indeed forms a cycle cover of $G$.
Our $\mathsf{MSO}_2$ formula $\theta(X)$ is of the form:
\[
(\exists Y \subseteq E) \phi(X,Y) 
\]
where,
\[
\phi(X,Y) \equiv  (\forall v \in V(Y)) (\exists h \in H)\ \left(\mathsf{DEG}(v)\right)
 \bigwedge \left(\mathsf{PATH}(h,v) \implies \mathsf{ION}^{*}(h,v))\right) 
\]

Here $V(Y) = \cup_{y \in Y}{y}$ is the set of vertices on which edges in $Y$
are incident.

\begin{enumerate}
 \item $\mathsf{DEG}(v)$ is the predicate that asserts that the degree of $v$ (in the 
 subgraph induced by the edges in $Y$) is exactly $2$.
 \item $\mathsf{PATH}(x, y)$ is the predicate that asserts that there is a
  path from $x$ to $y$.
\end{enumerate}
One can check that all the predicates above are $\mathsf{MSO}_2$ definable. 
%
\qed
\end{proof}

Now we are done with the help of the following fact (see e.g. \cite{MV}):
\begin{fact}\label{fact:ccsign}
The sign of a cycle cover consisting of $k$ cycles is $(-1)^{n+k}$ where
$n$ is the number of vertices in the graph.
\end{fact}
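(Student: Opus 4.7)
The plan is to identify cycle covers with permutations and then invoke the standard parity formula from elementary group theory. Every cycle cover $\calC = \{C_1, \ldots, C_k\}$ of a graph on $n$ vertices corresponds bijectively to a permutation $\sigma \in S_n$ whose disjoint cycle decomposition is exactly $C_1, \ldots, C_k$ (each directed cycle $(c_{i_1}, c_{i_2}, \ldots, c_{i_r})$ of the cover becomes the cycle $(c_{i_1}\ c_{i_2}\ \cdots\ c_{i_r})$ of $\sigma$). So it suffices to show that $\mbox{sign}(\sigma) = (-1)^{n-k}$, since $(-1)^{n-k} = (-1)^{n+k}$.

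The key step is the classical decomposition: a cycle of length $\ell$ can be written as a product of $\ell - 1$ transpositions, for example
\[
(c_1\ c_2\ \cdots\ c_\ell) = (c_1\ c_\ell)(c_1\ c_{\ell-1}) \cdots (c_1\ c_2).
\]
Applying this to each cycle $C_i$ of length $\ell_i$ and taking the product, $\sigma$ is expressed as a product of $\sum_{i=1}^{k}(\ell_i - 1) = n - k$ transpositions, using $\sum_i \ell_i = n$ since the cycles partition the vertex set.

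By the well-known fact that the parity of the number of transpositions in any factorization of a permutation is invariant (and equals $\mbox{sign}(\sigma)$), we conclude $\mbox{sign}(\sigma) = (-1)^{n-k}$, and hence the signed contribution of the cycle cover $\calC$ to the determinant is $(-1)^{n+k} \mbox{wt}(\calC)$, as claimed. The only nontrivial ingredient is the parity invariance of transposition factorizations, which is standard; there is no real obstacle here beyond citing or recalling that fact. This matches the definition of $\mbox{sign}(\calC)$ in Definition \ref{def:ccover} and justifies using cycle covers as the combinatorial objects summed in the determinant.
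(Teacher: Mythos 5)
Your proof is correct. The paper does not actually prove this fact --- it simply states it with a pointer to \cite{MV} --- and your argument (identify the cover with a permutation whose cycle decomposition is $C_1,\ldots,C_k$, write each length-$\ell$ cycle as $\ell-1$ transpositions so that $\sigma$ is a product of $n-k$ transpositions, and invoke parity invariance together with $(-1)^{n-k}=(-1)^{n+k}$) is precisely the standard derivation that the citation stands in for.
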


Thus, using the histogram version of Courcelle's theorem from \cite{EJT} we 
get that:

\begin{theorem}\label{thm:btwDet}
The determinant of a matrix $A$, which can be viewed as the adjacency matrix
of a directed graph of bounded treewidth, is in \Log.
\end{theorem}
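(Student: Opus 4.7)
The plan is to combine Lemma~\ref{lem:thetaX}, Fact~\ref{fact:ccsign}, and the histogram version of Courcelle's theorem from \cite{EJT} quoted above in order to evaluate the cycle-cover expansion of the determinant directly in logspace.

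First, I would apply Lemma~\ref{lem:thetaX} to the directed graph $H(A)$ to obtain an $\mathsf{MSO}_2$-formula $\theta(X)$ whose satisfying assignments are precisely the sets of heads of cycle covers of $H(A)$. The crucial arithmetic observation is that the number of heads of a cycle cover equals the number of cycles in the cover, so every satisfying $X$ has $|X| = k$, where $k$ is the number of cycles of the associated cover. Since $H(A)$ enriched with the auxiliary $ION$ relation still has bounded treewidth (by the earlier proposition), the histogram theorem supplies a logspace DTM producing, for every $k \in \{0,1,\ldots,n\}$, the number $N_k$ of cycle covers of $H(A)$ with exactly $k$ cycles.

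Second, by Fact~\ref{fact:ccsign} each such cycle cover contributes sign $(-1)^{n+k}$. For a $\{0,1\}$-matrix this yields
\[
\mbox{Det}(A) \;=\; \sum_{k=0}^{n} (-1)^{n+k}\, N_k,
\]
which is an alternating sum of $n+1$ polynomially bounded nonnegative integers and is therefore computable in logspace from the individual $N_k$ via standard iterated addition.

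The main obstacle is lifting this from $\{0,1\}$-matrices to matrices with general integer entries, where each cycle cover carries a nontrivial product of edge weights rather than weight~$1$, while the histogram theorem counts satisfying assignments rather than weighted sums. I would handle this by equipping $H(A)$ with a unary edge predicate for each distinct weight value and enlarging $\theta$ with one additional free set variable $X_w$ per weight $w$, so that the multi-dimensional histogram yields counts $N_{k,\mathbf{j}}$ of cycle covers with $k$ cycles and $j_w$ edges of each weight $w$; the determinant then reads
\[
\mbox{Det}(A) \;=\; \sum_{k=0}^{n} (-1)^{n+k} \sum_{\mathbf{j}} \Bigl(\prod_{w} w^{j_w}\Bigr)\, N_{k,\mathbf{j}}.
\]
Adding unary predicates does not change the treewidth, so the histogram theorem continues to apply; the only point requiring care is that the number of distinct weights, and hence the dimension of the histogram, remains bounded, which is the case in our intended applications (e.g.\ to Laplacian-like matrices with a small fixed set of entry values). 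This bookkeeping is where I expect the bulk of the formal proof to live, the combinatorial core being the one-line identity $|X|=k$ that couples the cardinality parameter of the histogram directly to the sign exponent of the cycle-cover expansion.
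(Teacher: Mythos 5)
Your proposal is correct and follows essentially the same route as the paper: apply the histogram version of Courcelle's theorem to $\theta(X)$ from Lemma~\ref{lem:thetaX}, use the identity $|X|=k$ together with Fact~\ref{fact:ccsign}, and evaluate the alternating sum $\sum_k (-1)^{n+k} N_k$ in logspace. Your final paragraph on matrices with a bounded set of entry values is not needed for this theorem (which concerns $\{0,1\}$-matrices) but anticipates exactly the paper's Lemma~\ref{lem:detsup}, using the same device of one extra free edge-set variable per weight value.
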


\begin{proof}
The histogram version of Courcelle's theorem as described in \cite{EJT}
when applied to the formula $\theta(X)$ above yields the number of cycle covers
of $H(A)$ parametrized on $|X|$. But in the notation of the fact above,
$|X| = k$ so we can easily compute the determinant as the alternating sum of
these counts.\qed
\end{proof}

Our final algorithm for computing the determinant of bounded treewidth matrices 
is the following: On input $0-1$ matrix $A$:

\begin{enumerate}
 \item Compute the tree decomposition $T_A$ of the underlying directed graph $G_A$.
 \item Compute an ordering of the vertices of $G_A$ using the labels $(DFS(T_A), B(T_A))$.
 \item Implement the cardinality version of \cite{EJT} via Theorem \ref{thm:btwDet}.
\end{enumerate}

\begin{lemma}\label{lem:hard}
 For all constant $k \ge 1$, computing the determinant of an $(n \times n)$ matrix $A$
 whose underlying undirected graph has treewidth at most $k$ is $\Log$-hard.
\end{lemma}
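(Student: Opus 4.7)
My plan is to reduce from the \Log-complete problem of directed $s$-$t$ reachability in oriented paths (equivalently, $\mathsf{ORD}$), which is known to be \Log-hard under \ACz-reductions. An instance is a directed graph $G$ on $n$ vertices whose underlying undirected graph is a simple path, together with two designated vertices $s, t$; we ask whether $t$ is reachable from $s$ in $G$.

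Given such an instance, I would construct the $(n \times n)$ matrix $B := I - A_G + E_{t,s}$, where $A_G$ is the adjacency matrix of $G$ and $E_{t,s}$ has a single $1$ in position $(t,s)$. The digraph underlying $B$ consists of a self-loop at every vertex, the edges of $G$ with weight $-1$, and one closing edge $t \to s$ of weight $+1$. Its underlying undirected graph is a path with a single added chord, hence has treewidth at most $2$. The construction is obviously \ACz-computable. Since any treewidth-$k$ graph with $k \ge 2$ contains treewidth-$2$ graphs as a subclass, this will yield \Log-hardness for every such $k$.

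The core claim is $\det(B) = 1 + \#\{\text{directed } s\text{-}t \text{ paths in } G\}$, which I would prove using the cycle-cover formula from Definition~\ref{def:ccover}. Because $G$ is a DAG, every non-trivial cycle in the digraph underlying $B$ must use the chord $(t,s)$ together with a directed $s$-$t$ path in $G$; the remaining vertices are forced to be covered by self-loops. The trivial all-self-loops cover contributes $+1$; an $s$-$t$ path giving a cycle of length $\ell$ produces a cover with $k = 1 + (n - \ell)$ cycles, weight $(-1)^{\ell-1}$ (coming from $\ell - 1$ edges of weight $-1$ together with the one chord edge of weight $+1$), and sign $(-1)^{n+k} = (-1)^{1-\ell}$; the product is $+1$ in every case. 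Since the underlying graph of $G$ is a simple path, there is at most one $s$-$t$ path, so $\det(B) \in \{1, 2\}$, and $t$ is reachable from $s$ in $G$ iff $\det(B) = 2$. Reading off this bit from the determinant is trivially done in \Log.

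The only part requiring care is the sign/weight bookkeeping in the cycle-cover expansion---one must verify that the cycle length $\ell$ cancels out of sign~$\times$~weight so that each $s$-$t$ path contributes exactly $+1$. The enumeration of the relevant cycle covers is essentially forced by the acyclicity of $G$, and both the treewidth bound and the \ACz-computability of the reduction are immediate.
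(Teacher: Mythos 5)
Your reduction is correct, and it is from the same source problem ($\mathsf{ORD}$) as the paper, but the construction is genuinely different. The paper keeps the instance a $\{0,1\}$-matrix: it rewires the directed path $P$ itself (adding the $2$-cycles/edges $(a,s'),(s',t),(t,s),(s,a),(b,t')$ and deleting $(s',s),(t,t')$) so that when $s$ precedes $t$ there are exactly two cycle covers of opposite sign and the determinant is $0$, while when $t$ precedes $s$ there is a unique cycle cover and the determinant is $\pm 1$; reachability is thus equivalent to $\det = 0$. You instead use the classical ``$I-A$ plus a feedback edge'' trick, and your sign/weight bookkeeping ($(-1)^{n+k}\cdot(-1)^{\ell-1}=+1$ with $k=1+n-\ell$) is right, so $\det(B)=1+\#\{s\text{-}t\text{ paths}\}\in\{1,2\}$ as claimed. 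Your version is arguably cleaner --- it avoids the paper's delicate edge rewiring and its two-case analysis, and it works verbatim for any bounded-treewidth DAG --- but it pays two small prices. First, $B$ has entries in $\{-1,0,1\}$, so you establish hardness for the class covered by Lemma~\ref{lem:detsup} rather than for the $\{0,1\}$-adjacency-matrix class of Theorem~\ref{thm:btwDet}, which is the upper bound this lemma is meant to match; this is easily repaired, since replacing $-A_G$ by $+A_G$ gives sign-times-weight $(-1)^{\ell+1}$ per path, hence $\det\in\{0,2\}$ if reachable and $\det=1$ otherwise, so testing $\det\neq 1$ still works with a $\{0,1\}$-matrix. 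Second, your instance has treewidth $2$ (the chord $\{t,s\}$ closes a long undirected cycle), so strictly speaking the $k=1$ case of the statement is not covered --- though the paper's own $P'$ has the same feature, so this is a shared quibble rather than a defect of your argument. One more micro-case to note: if $(t,s)$ is already an edge of $G$ your chord entry becomes $0$ rather than $+1$, but then $t$ is not reachable from $s$ (the underlying graph being a simple path) and $\det(B)=1$, so the conclusion survives.
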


\begin{proof}
 We reduce the problem $\mathsf{ORD}$ of deciding for a directed path $P$
 and two vertices $s, t \in V(P)$ if there is a path from $s$ to $t$ (known to be
 $\Log$-complete via \cite{Ete97}) to computing the determinant of bounded treewidth
 matrices (Note that $P$ is a path and hence it has treewidth $1$). Since $\mathsf{ORD}$
 is $\Log$-hard, we get that the determinant of bounded treewidth matrices is $\Log$-hard.
 
 Our reduction is as follows: Given  a directed path $P$ with source $a$, sink $b$ 
 and distinguished vertices $s$ and $t$, we construct a new graph $P'$ as follows: 
 Add edges $(a, s'), (s', t), (t, s), (s, a)$ and  $(b, t')$ and remove edges 
 $(s', s), (t, t')$ where $s'$ and $t'$ are vertices in $P$ such that 
 $(s', s), (t, t') \in E(P)$ (See Figure \ref{fig:hardness}).
 
 We claim that there is a directed path between $s$ and $t$ if and only if  
 the determinant of the adjacency matrix of $P'$ is zero. To see this, notice 
 that if there is a directed path from $s$ to $t$ in $P$, then there are exactly
 two cycle covers in $P'$ -- $(a,s') (s,t) (t',b)$ with three cycles and $(a, s', t, s)
 , (t', b)$ with two cycles. Using Fact \ref{fact:ccsign}, the signed sum of these cycle
 covers is $(-1)^{n+3} + (-1)^{n+2} = 0$, which is exactly the determinant of $P'$.
 
 In the case that $P$ has a directed path from $t$ to $s$, then there is exactly one cycle
 namely $(a, t, s', b, t', s)$. We argue as follows: The edges $(t,s), (s,b), (b,t'), (t',s')$ are in the
 cycle cover since they are the only incoming edges to $s, b, t', s'$ respectively. So $(t, s, b, t', s')$
 is a part of any cycle cover of the graph. This forces one to pick the edge $(s', a)$ and hence we have
 exactly one cycle in the cycle cover for $P'$.
 
\begin{figure}
\begin{center}
\includegraphics[width=\textwidth,height=\textheight,keepaspectratio]{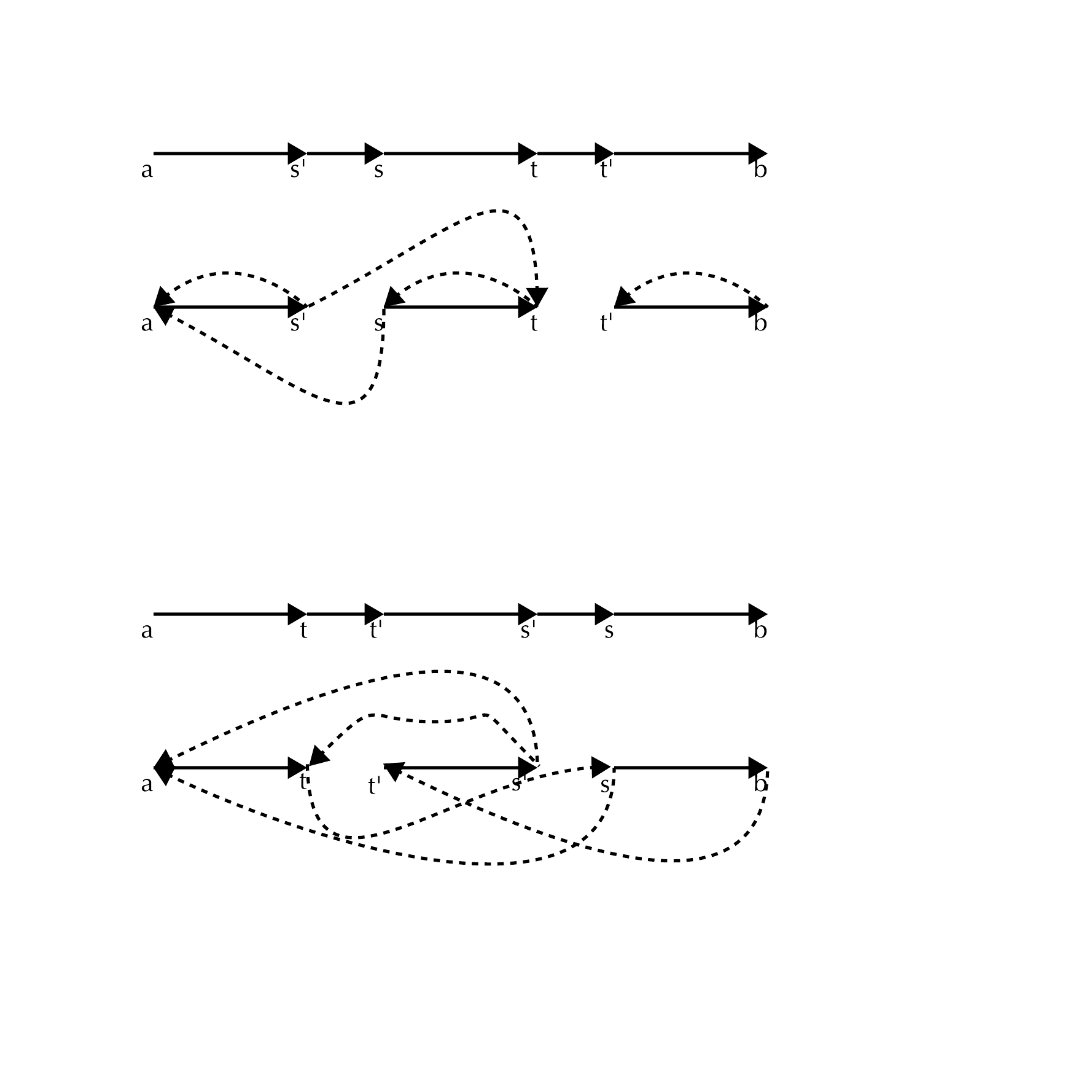}
\caption{$\Log$-hardness for bounded treewidth determinant: Above - $s$ occurs before $t$; Below - $t$ occurs before $s$}
\label{fig:hardness}
\end{center} 
\end{figure}
 
\end{proof}

\section{Applications to Linear Algebraic problems on Bounded Treewidth Matrices}\label{sec:appLA}
%
%

\subsection{Characteristic Polynomial}
\begin{corollary}{(of Theorem \ref{thm:btwDet})}{\label{cor:charpoly}}
There is a logspace machine that takes as input a $(n \times n)$ matrix $A$,
$1^m$ where $1 \leq m \leq n$ and computes the coefficient of $x^m$ in 
characteristic polynomial ($\chi_A(x) = det(xI - A)$) of $A$.
\end{corollary}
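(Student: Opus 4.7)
The plan is to express the coefficient of $x^m$ in $\chi_A(x)$ combinatorially as a signed count of \emph{partial} cycle covers of $H(A)$, and then extract this count by adapting the $\mathsf{MSO}_2$ formula of Lemma \ref{lem:thetaX} into one with two free set variables, so that the histogram version of Courcelle's theorem from \cite{EJT} applies directly.

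First, using the identity $\det(xI - A) = \sum_{k=0}^{n} x^{n-k} (-1)^k e_k(A)$, where $e_k(A)$ denotes the sum of all $k \times k$ principal minors of $A$, the coefficient of $x^m$ equals $(-1)^{n-m} \sum_{|S| = n-m} \det(A_S)$, with $A_S$ the principal submatrix indexed by $S$. Expanding each $\det(A_S)$ as a signed sum of cycle covers of the induced subgraph on $S$ and applying Fact \ref{fact:ccsign}, the two factors of $(-1)^{n-m}$ cancel and the coefficient of $x^m$ reduces to
\[
\sum_{\mathcal{C}} (-1)^{c(\mathcal{C})},
\]
where $\mathcal{C}$ ranges over all partial cycle covers of $H(A)$ that use exactly $n - m$ vertices, and $c(\mathcal{C})$ is the number of disjoint cycles in $\mathcal{C}$.

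I would next modify $\theta(X)$ from Lemma \ref{lem:thetaX} into an $\mathsf{MSO}_2$ formula $\theta'(Y, H)$ with two free set variables: $Y \subseteq E$ encoding the edges of the partial cycle cover and $H \subseteq V$ encoding the set of heads. The body of $\theta'$ is essentially the $\phi(H, Y)$ from the proof of Lemma \ref{lem:thetaX}, except that the degree-two condition is imposed only on vertices incident to some edge of $Y$, leaving vertices outside $V(Y)$ uncovered. Since induced subgraphs of a bounded-treewidth graph remain bounded-treewidth, the same tree decomposition together with its $\mathsf{ION}^{*}$ ordering can be reused, and the correctness of $\theta'$ is inherited from that of $\theta(X)$.

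Finally, applying the histogram version with two free set variables produces, in logspace, a table whose entry at $(i, j)$ is the number of partial cycle covers with $|Y| = i$ edges and $|H| = j$ heads. Because a partial cycle cover has as many edges as vertices it covers, it suffices to read off row $i = n - m$ and output $\sum_{j} (-1)^j \cdot \mathrm{histogram}[n - m, j]$, which a logspace transducer can emit on input $(A, 1^m)$. The main obstacle is purely book-keeping: one must phrase $\theta'$ with exactly the two free set variables demanded by the histogram theorem, and accumulate the final alternating sum of counts (whose individual bit-lengths can be polynomial) in logspace, both of which are routine since iterated integer addition lies inside $\Log$.
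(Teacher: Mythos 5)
Your proposal is correct and takes essentially the same approach as the paper: its two-line proof (``while counting cycle covers with $k$ cycles, keep track of the number of self-loops'') is exactly your partial-cycle-cover formulation, since a cycle cover of the graph of $xI-A$ that uses $m$ of the $x$-weighted self-loops is the same object as a partial cycle cover of $H(A)$ leaving $m$ vertices uncovered. Your derivation via principal minors and the two-parameter histogram just makes explicit the sign bookkeeping and the modified $\mathsf{MSO}_2$ formula that the paper leaves implicit.
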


\begin{proof}
While counting the number of cycle covers with $k$
cycles, we can keep track of the number of self-loops
occurring in a cycle cover. Hence we can also compute the 
characteristic polynomial in \Log. \qed
\end{proof}

\subsection{Trace of matrix powers}
\begin{theorem}\label{thm:matpow}
 There is a logspace algorithm that on input a $(n \times n)$ matrix $A \in \{0,1\}^{n^2}$
 $1^k$ and $1^i$ computes the $i$-th bit of trace of $A^k$. 
\end{theorem}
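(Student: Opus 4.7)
The plan is to reduce computing $\operatorname{tr}(A^k)$ to the characteristic polynomial computation via Newton's identities. Writing $\chi_A(x) = x^n + c_1 x^{n-1} + \cdots + c_n$ and $p_k := \operatorname{tr}(A^k)$, Newton's identities give the order-$n$ linear recurrence
\[
p_k + \sum_{j=1}^{\min(k,n)} c_j\, p_{k-j} \;=\; \begin{cases} -k\, c_k & \text{if } k \le n, \\ 0 & \text{if } k > n, \end{cases}
\]
with $p_0 = n$. By Corollary~\ref{cor:charpoly} each $c_j$ is computable in \Log.

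First I would stand up logspace subroutines for each $c_j$ via Corollary~\ref{cor:charpoly}. Next I would use the identity $p_k = \operatorname{tr}(C^k)$, where $C$ is the $n\times n$ companion matrix of $\chi_A$. The underlying undirected graph $G(C)$ is the fan (the path $1{-}2{-}\cdots{-}n$ together with the edges $\{i,n\}$ for all $i$), which has treewidth $2$. Hence $p_k$ equals a $c_j$-weighted sum over closed walks of length $k$ in the bounded-treewidth directed graph $G(C)$, with edge weights drawn from $\{1, -c_1, \ldots, -c_n\}$.

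The main obstacle is accommodating the polynomial length $k$ inside a fixed $\mathsf{MSO}_2$ formula. I would address it by constructing, in logspace from $G(C)$ and the unary input $1^k$, the auxiliary structure $\mathcal{S} = G(C) \sqcup L_k$, where $L_k$ is an ordered path on $k$ vertices built directly from $1^k$. Since disjoint union preserves bounded treewidth, $\operatorname{tw}(\mathcal{S}) = 2$. A fixed $\mathsf{MSO}_2$ formula over $\mathcal{S}$ describes a closed length-$k$ walk in $G(C)$ by partitioning $V(L_k)$ according to the bag-colour (as defined in Section~\ref{sec:btwDet}, drawn from a constant-size palette) of the vertex of $G(C)$ visited at each time step, together with edge-adjacency between consecutive positions in $L_k$ and the cyclic closure condition. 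Auxiliary set variables record how many times each edge-type is traversed, so that the multiplicative weight $\prod(-c_j)$ is determined by the tuple of set-cardinalities.

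Invoking the histogram version of Courcelle's theorem (Theorem~5 of~\cite{EJT}) returns these cardinalities in logspace; summing across histogram bins (iterated addition and multiplication of polynomially-many integers, each in \Log) then produces $p_k$. Finally, since $|p_k| \le n^{k+1}$ has bit-length $O(k \log n)$, polynomial in the input, extracting the requested $i$-th bit from the computed value is straightforward in logspace.
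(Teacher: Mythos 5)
Your first step---Newton's identities relating the power sums $p_k=\operatorname{tr}(A^k)$ to the coefficients $c_j$ of $\chi_A$, with each $c_j$ supplied in \Log\ by Corollary~\ref{cor:charpoly}---is exactly the paper's starting point. But the second half of your argument, the companion-matrix/$\mathsf{MSO}$ construction, has a genuine gap. The companion matrix $C$ carries $n$ \emph{distinct} edge weights $-c_1,\dots,-c_n$, each an integer of polynomial bit-length. The histogram machinery you invoke (as in Lemma~\ref{lem:detsup}) only works when the weights come from a set of \emph{constant} size: tracking the traversal multiplicity of each weighted edge type would require $n$ free set variables, so the formula is not fixed, and the resulting histogram has $\Theta(k^n)$ bins, which cannot even be enumerated in logspace. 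Moreover, the encoding of a length-$k$ closed walk as a colouring of the disjoint timeline $L_k$ by the constant-size bag-colour palette does not determine which vertex of $G(C)$ is visited at each step (many vertices share a colour), and $\mathsf{MSO}$ over a disjoint union $G(C)\sqcup L_k$ has no means of binding positions of $L_k$ to specific vertices of $G(C)$ without adding relations that destroy bounded treewidth. Counting length-$k$ walks is precisely the part of this problem that is \emph{not} an $\mathsf{MSO}$-histogram computation, which is why the theorem needs a separate argument at all. (A smaller slip: with $p_0=n$ and the sum extended to $j=k$, the right-hand side of your recurrence should be $(n-k)c_k$, not $-kc_k$.)

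The paper closes this gap with arithmetic rather than logic. From the $c_j$ one forms $S_n(t)=\sum_j S^j_n t^j$ and uses Proposition~\ref{prop:NewtonProp} to write the power-sum generating function as the rational function $P_n(t)=\bigl(nS_n(t)-tS'_n(t)\bigr)/S_n(t)$; the coefficient of $t^k$ is $\pm\operatorname{tr}(A^k)$. Extracting that coefficient is done by substituting $t=2^{n^2}$, which turns the polynomial division into a single integer division, computable in \Log\ by \cite{HAB}; the desired bit is then read off after scaling by an appropriate power of $2$. If you want to salvage your write-up, replace everything after the invocation of Corollary~\ref{cor:charpoly} with this generating-function/integer-division step; the companion matrix and the auxiliary path $L_k$ should be dropped entirely.
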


We first introduce some notation:
\begin{definition}
Let $S^k_n$ denote the elementary symmetric polynomials (for $n > 0, 0 < k \leq n$) i.e.
\[
S^k_n(X_1,\ldots,X_n) = \sum_{1 \leq j_1 < j_2 < \ldots < j_k \leq n}{X_{j_1}X_{j_2} \ldots X_{j_k}}
\]
By convention, $S^0_n(X_1,\ldots,X_n) =1$ and if $k > n$, the $S^k_n$ is
identically zero.
Let $P^k_n$ denote the $k$-th power sums (for $k > 0, n > 0$) i.e.
\[
P^k_n(X_1,\ldots,X_n) = \sum_{i=1}^n{{X_i}^k}
\]
Note that $P^0_n = \sum_{i=1}^n X_i^0 = n$. 
Also, let $S_n(t)$ be the following univariate polynomial, (where the 
variables $X_i$ are implicit):
\[
S_n(t) = \sum_{i = 0}^n{S^k_n(X_1,\ldots,X_n)t^k}
\]
Similarly, let,
\[
P_n(t) = \sum_{i=1}^{\infty}{(-1)^kP^k_n(X_1,\ldots,X_n)t^k}
\]
\end{definition}

We will often write $S^k_n, P^k_n$ where $X_1,\ldots,X_n$ are understood.

The following is immediate:
\begin{fact}
\begin{enumerate}
\[
S_n(t) = \prod_{i=1}^n{(1 + X_it)} 
\]
\[
P_n(t) = \sum_{i=1}^n{\frac{1}{1 + X_it}} 
\]
\end{enumerate}
\end{fact}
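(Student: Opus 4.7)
The plan is to verify each of the two identities directly as equalities in the formal power series ring $\mathbb{Z}[X_1,\ldots,X_n][[t]]$. Both assertions are essentially unpackings of the defining sums, so I do not expect any serious obstacle; the only care required is bookkeeping of indices and constant terms.

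For the first identity $S_n(t) = \prod_{i=1}^n(1+X_it)$, I would expand the product by distributivity. Each term in the expansion corresponds to a choice, for each factor $(1+X_it)$, of either $1$ or $X_it$, and is thus indexed by a subset $J \subseteq [n]$ of factors from which $X_it$ is taken:
\[
\prod_{i=1}^n(1+X_it) \;=\; \sum_{J \subseteq [n]} \Bigl(\prod_{j\in J} X_j\Bigr) t^{|J|}.
\]
Grouping subsets by their cardinality $k=|J|$ gives $\sum_{k=0}^n \bigl(\sum_{|J|=k}\prod_{j\in J} X_j\bigr) t^k$, and the inner sum is precisely $S^k_n(X_1,\ldots,X_n)$ by definition. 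Using the stated conventions $S^0_n = 1$ and $S^k_n = 0$ for $k > n$, this is exactly $S_n(t)$.

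For the second identity $P_n(t) = \sum_{i=1}^n \frac{1}{1+X_it}$, I would expand each summand on the right as a geometric series in the formal variable $t$:
\[
\frac{1}{1+X_it} \;=\; \sum_{k\ge 0}(-X_it)^k \;=\; \sum_{k\ge 0}(-1)^k X_i^k\, t^k.
\]
Summing over $i=1,\ldots,n$ and exchanging the finite outer sum with the formal power series sum (this exchange is unproblematic because the sum over $i$ has only $n$ terms) yields
\[
\sum_{i=1}^n \frac{1}{1+X_it} \;=\; \sum_{k\ge 0}(-1)^k \Bigl(\sum_{i=1}^n X_i^k\Bigr) t^k \;=\; \sum_{k\ge 0}(-1)^k P^k_n(X_1,\ldots,X_n)\, t^k,
\]
which matches the definition of $P_n(t)$, with the convention $P_n^0 = n$ contributing the constant term.

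The argument is wholly mechanical; the only point that might trip one up is reconciling the lower index of summation in the definition of $P_n(t)$ with the constant-term contribution $P^0_n = n$ that appears on the right-hand side, but this is a matter of convention rather than substance, and both identities are then immediate from the expansions above.
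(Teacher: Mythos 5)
Your proof is correct and is exactly the paper's approach --- the paper's own proof is the one-line remark ``follows straightaway by expanding out the series and comparing coefficients,'' and you have simply carried out that expansion explicitly (subset expansion of the product for $S_n(t)$, geometric series for $P_n(t)$). Your closing remark about reconciling the summation index in the definition of $P_n(t)$ with the constant term $P^0_n=n$ correctly identifies the only (notational) wrinkle in the statement as printed.
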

\begin{proof}
Follows straightaway by expanding out the series and
comparing coefficients. \qed 
\end{proof}
These identities are closely related to Newton's identities\cite{macd}.

\begin{proposition}\label{prop:NewtonProp}
\[
tS'_n(t) = S_n(t)\left(n - P_n(t)\right) 
\]
where $S'_n(t)$ denotes the derivative of $S_n(t)$ wrt $t$. 
\end{proposition}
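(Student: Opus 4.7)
The plan is to derive the identity by a direct logarithmic-derivative calculation, relying on the two closed-form expressions for $S_n(t)$ and $P_n(t)$ established in the preceding fact. Concretely, I would start from the product formula $S_n(t) = \prod_{i=1}^n (1 + X_i t)$ and differentiate termwise, which gives
\[
\frac{S'_n(t)}{S_n(t)} = \sum_{i=1}^n \frac{X_i}{1 + X_i t}.
\]
Multiplying both sides by $t$ puts the identity into a form where each summand $X_i t / (1 + X_i t)$ can be rewritten using the elementary algebraic manipulation $X_i t / (1 + X_i t) = 1 - 1/(1 + X_i t)$.

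Next I would collect the resulting terms: summing the ``$1$'' over $i = 1, \dots, n$ contributes exactly $n$, while the remaining piece is $\sum_{i=1}^n 1/(1 + X_i t)$, which by the second part of the fact is precisely $P_n(t)$. Thus
\[
\frac{tS'_n(t)}{S_n(t)} = n - P_n(t),
\]
and clearing the denominator $S_n(t)$ yields the claimed identity $tS'_n(t) = S_n(t)\bigl(n - P_n(t)\bigr)$.

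There is essentially no obstacle here; the entire argument is a one-line logarithmic derivative once the two closed forms for $S_n(t)$ and $P_n(t)$ are in hand. The only minor bookkeeping point is justifying the termwise manipulations formally as identities of formal power series in $t$ (so that convergence issues do not arise), but since both sides live in the ring $\mathbb{Z}[X_1,\dots,X_n][[t]]$ and the operations used (sum, derivative, geometric-series expansion of $1/(1 + X_i t)$) are all well defined on this ring, this is routine. The identity is of course an instance of the classical Newton identities, as the paper already notes.
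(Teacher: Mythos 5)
Your proof is correct and is essentially the argument the paper intends: the paper's own proof is just the one-line remark ``immediate from expanding out the series and comparing coefficients,'' and your logarithmic-derivative computation from the closed forms $S_n(t)=\prod_i(1+X_it)$ and $P_n(t)=\sum_i 1/(1+X_it)$ of the preceding fact is the standard, explicit way to carry that out. Your remark about working in the formal power series ring is the right way to dispose of convergence, and your use of the fact's closed form for $P_n(t)$ (which includes the constant term $P^0_n=n$) is exactly what makes the identity come out as $tS'_n(t)=S_n(t)(n-P_n(t))$.
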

\begin{proof}
Immediate from expanding out the series and
comparing coefficients. \qed 
\end{proof}

\begin{proof}{(of Theorem \ref{thm:matpow})}
We can compute $P^k_n(X_1,\ldots,X_n) = \sum_{i=1}^n{{X_i}^k}$ where
$X_1, \ldots, X_n$ are the eigenvalues of $A^k$ by Proposition \ref{prop:NewtonProp}.
This is achieved by oracle access to the characteristic polynomial of $A$ 
which is computed by a logspace machine as in Corollary \ref{cor:charpoly}. The trace of $A^k$ 
is the number of closed walks of length exactly $k$. That is $\mbox{Trace}(A^k) = \sum_{i=1}^n
w_{ii}^k$ where $w_{ii}^k$ is the number of closed walks of length $k$ that start and end at vertex $i$.
$P_n^k$ itself is computed by polynomial division as a ratio $\frac{nS_n(t)-tS'_n(t)}{S_n(t)}$. 
This can be done in $\Log$ : Substitute $t = 2^{n^2}$, reducing the problem of
polynomial division to integer division which can be done in $\Log$
via \cite{HAB}. Now the $i$-th bit of the coefficient of $t^k$ of 
$P_n(t)$ can be read off the quotient of this integer division via
scaling by a suitable power of $2$, in this case $2^{n^2(n-k) + i}$.\qed
\end{proof}

%


\section{Applications to Counting problems on Bounded Treewidth graphs}\label{sec:appG}

We now document some consequences of Theorem \ref{thm:btwDet} to some 
counting problems on bounded treewidth graphs. We show that counting the 
number of spanning trees and Euler tours of directed graphs and the Euler
tours of undirected graphs is in $\Log$.

\subsection{Counting the number of Spannning Trees in Bounded Treewidth graphs}

Counting the number of spanning trees of an graph(both directed and undirected) was long
known to be in polynomial time via a determinant computation by Kirchoff's\footnote{See Section 1.2 of 
\url{http://math.mit.edu/~levine/18.312/alg-comb-lecture-19.pdf}
for a nice proof of the undirected case via the directed case} matrix tree theorem\cite{RPSalgebraic}:
\begin{fact}\label{fact:arb}
The number of arborescences of a directed graph equals any
cofactor of its Laplacian.
\end{fact}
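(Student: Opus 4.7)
The plan is to prove Kirchhoff's directed Matrix-Tree Theorem by factoring the Laplacian as a product of two incidence-type matrices and applying the Cauchy-Binet formula. Fix a vertex $r$ to serve as the root and let $L_r$ denote the principal submatrix of $L$ with row and column $r$ deleted; the goal is to show $\det(L_r)$ equals the number of arborescences of $G$ rooted at $r$ (spanning subdigraphs in which every non-root vertex has a unique directed path to $r$). First I would construct $|V|\times|E|$ matrices $D$ and $E$ defined by $D_{v,e}=1$ iff $v$ is the tail of $e$, and $E_{v,e}=+1$, $-1$, or $0$ according as $v$ is the tail of $e$, the head of $e$, or neither endpoint. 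A direct entrywise check gives $L = D E^{T}$ in the out-degree convention, and consequently $L_r = D_r (E^T)_r$ on principal submatrices.

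Next I would expand using Cauchy-Binet,
\[
\det(L_r) \;=\; \sum_{\substack{S \subseteq E \\ |S| = |V|-1}} \det\!\bigl(D_r[S]\bigr)\cdot\det\!\bigl((E^T)_r[S]\bigr),
\]
and identify which subsets $S$ contribute. Since each column of $D$ has a single $1$ (in the tail row), $\det(D_r[S])\in\{-1,0,+1\}$ is nonzero precisely when $S$ assigns one outgoing edge to each non-root vertex, i.e.\ when $S$ is a functional subdigraph on $V\setminus\{r\}$. By the total unimodularity of the signed incidence matrix $E$, $\det((E^T)_r[S])\in\{-1,0,+1\}$ is nonzero precisely when $S$ is a spanning tree of the underlying undirected graph. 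The intersection of these two conditions is exactly the collection of arborescences rooted at $r$.

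The main obstacle, and where care is needed, is the sign analysis: we need every surviving $S$ to contribute $+1$ so that the Cauchy-Binet sum collapses to a pure count, with no cancellation. I would handle this by processing the non-root vertices of $S$ in postorder (leaves first) and listing the edges of $S$ in the corresponding order (each vertex alongside its unique outgoing edge in $S$). With this arrangement $D_r[S]$ becomes the identity matrix---each row has its lone $+1$ on the diagonal---while $(E^T)_r[S]$ becomes lower-triangular with $+1$'s on the diagonal, since the off-diagonal $-1$ entries (coming from incoming edges in $S$) lie strictly below the diagonal because children precede parents in postorder. Hence both determinants equal $+1$, giving $\det(L_r) = \#\{\text{arborescences rooted at } r\}$, which is the desired cofactor identity.
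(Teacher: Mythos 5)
The paper does not prove this statement at all: it is recorded as a Fact and delegated to the cited reference \cite{RPSalgebraic} (and a footnote pointing to lecture notes), so your proposal is supplying a proof where the paper offers only a citation. Your argument is the standard and correct one: the factorization $L=DE^{T}$ of the out-degree Laplacian into the tail-indicator matrix and the signed incidence matrix, Cauchy--Binet over $(|V|-1)$-subsets of edges, the identification of the surviving subsets as exactly those edge sets that are simultaneously functional on $V\setminus\{r\}$ (one out-edge per non-root vertex) and spanning trees of the underlying undirected graph --- i.e., arborescences oriented toward $r$ --- and the postorder arrangement that makes both minors equal to $+1$ so the sum has no cancellation. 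Each of these steps checks out, including the sign computation: with children listed before parents, the $-1$ entries of the reduced incidence minor sit strictly on one side of the diagonal, so both determinants are $+1$ under the same simultaneous row/column ordering, which is what Cauchy--Binet requires. The one point where your proof and the stated Fact diverge is the phrase ``any cofactor'': you establish only the principal cofactor $\det(L_r)$, and in fact for a general digraph the cofactors of $L=D^{\mathrm{out}}-A$ are \emph{not} all equal (only the row sums of $L$ vanish, which forces equality of the cofactors obtained by deleting a fixed row and varying the column; the count genuinely depends on the root $r$, as a single directed edge on two vertices already shows). The Fact as literally stated is therefore only correct for balanced (Eulerian) digraphs, where in-degree equals out-degree, both row and column sums of $L$ vanish, and the arborescence count is root-independent --- which is precisely the setting in which the paper applies it via the BEST theorem. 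It would strengthen your write-up to note this restriction explicitly rather than leave ``any cofactor'' unexamined.
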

where the Laplacian of a directed graph $G$ is $D - A$ where $D$
is the diagoinal matrix with the $D_{ii}$ being the out-degree of vertex $i$
and $A$ is the adjacency matrix of the underlying undirected graph.

However, Theorem~\ref{thm:btwDet} holds only for $\{0,1\}$-matrices while
Laplacians contains entries which are from $\{0,-1\}\cup\mathbb{N}$ because
the diagonal entries are the out-degrees of the various vertices and the off-diagonal
non-zero terms are all $-1$'s.

We first consider a generalisation of the determinant of $\{0,1\}$-matrices of
bounded tree-width viz. the determinant of matrices where the entries are from a 
set whose size is a fixed universal constant and the underlying graph consisting
of the non-zero entries of $A$ is of bounded tree-width.
\begin{lemma}\label{lem:detsup}
Let $A$ be a matrix whose entries belong to a set $S$ of
fixed size independent of the input or its length. If the underlying digraph 
with adjacency matrix $A'$, where $A'_{ij} = 1$ iff $A_{ij} \neq 0$, is of 
bounded tree-width then the determinant of $A$ can be computed in \Log.
\end{lemma}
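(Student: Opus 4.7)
The plan is to generalize Theorem \ref{thm:btwDet} by refining the way cycle covers are parametrized. In the $\{0,1\}$-case it sufficed to group cycle covers by the single parameter $k$ (the number of cycles), because every cover had weight $0$ or $1$. Here the weight of a cycle cover is a product of entries of $A$ drawn from the fixed-size set $S$, so I will additionally track, for each nonzero $s \in S$, how many edges of the cover carry the value $s$. Since $|S|$ is a universal constant, this adds only constantly many free set variables to the MSO$_2$ formula, so the histogram version of Courcelle's theorem from \cite{EJT} will still yield a polynomially-indexed table of counts that can be traversed in logspace.

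Concretely, I would first enrich the input structure by adding, for each nonzero $s \in S$, a unary edge-predicate $L_s$ with $L_s(e)$ true iff the entry of $A$ indexed by $e$ equals $s$. These predicates are computable from $A$ in logspace, and adding constantly many unary labels does not affect the tree-width of the underlying graph. Starting from the formula $\theta(X)$ of Lemma \ref{lem:thetaX}, I would then build
\[
\theta'(X, (Y_s)_{s \in S\setminus\{0\}}) \equiv (\exists Y \subseteq E)\Bigl[\phi(X, Y) \wedge \forall e\bigl((e \in Y) \leftrightarrow \textstyle\bigvee_{s} (e \in Y_s)\bigr) \wedge \bigwedge_s \forall e\bigl((e \in Y_s) \to L_s(e)\bigr)\Bigr],
\]
which says that $Y$ is the edge set of a cycle cover with heads exactly $X$ and that $Y_s$ picks out precisely the label-$s$ edges of $Y$. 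Distinct labels force the $Y_s$ to be pairwise disjoint automatically, and every subformula is plainly MSO$_2$.

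Applying the histogram version of Courcelle's theorem to $\theta'$ then yields in logspace the counts $N(k,(n_s))$ of cycle covers with $k$ cycles and $n_s$ label-$s$ edges. By Fact \ref{fact:ccsign} and the cycle-cover expansion of the determinant,
\[
\mbox{Det}(A) \;=\; \sum_{k=0}^{n}\;\sum_{(n_s)} (-1)^{n+k} \Bigl(\prod_{s \in S\setminus\{0\}} s^{n_s}\Bigr)\cdot N\bigl(k,(n_s)\bigr).
\]
This is a sum over $O(n^{|S|+1})$ tuples, hence polynomially many, and both the powers $s^{n_s}$ (which may be exponentially large in binary) and the accumulation of the sum are handled in logspace by the iterated-integer-multiplication algorithm of \cite{HAB}.

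The only real obstacle is bookkeeping: one must check that augmenting the structure with the $L_s$ predicates and adding $|S|$ extra free set variables to the formula preserves all the structural hypotheses (bounded tree-width of the enriched structure, a fixed MSO$_2$ formula of constant size, constantly many free variables so the histogram index set remains polynomially bounded). All of this is immediate because $|S|$ is a universal constant, so the core Elberfeld--Jakoby--Tantau machinery invoked in Theorem \ref{thm:btwDet} applies verbatim; what remains is arithmetic in \Log.
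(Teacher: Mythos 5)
Your proposal is correct and follows essentially the same route as the paper: introduce constantly many edge-label predicates for the values in $S$, add one free edge-set variable per value to the cycle-cover formula of Lemma \ref{lem:thetaX}, and read the determinant off the resulting multi-dimensional histogram by weighting each entry with $(-1)^{n+k}\prod_s s^{n_s}$. Your extra remarks on carrying out the final arithmetic via iterated integer multiplication in \Log{} are a welcome bit of added care but do not change the argument.
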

\begin{proof}
Let $s = |S|$ be a universal constant, $S = \{c_1,\ldots,c_s\}$
 and let $\mbox{val}_i$ be the predicates that partitions the edges of 
$G$ according to their values i.e.  $\mbox{val}_i(e)$ is true iff the edge $e$ has 
value $c_i \in S$.
Our modified formula $\psi(X,Y_1,\ldots,Y_s)$ will contain $s$ unquantified 
new edge-set variables $Y_1,\ldots, Y_s$ along with the old vertex variable $X$,
and is given by:
\[
\forall{e \in E}{\left(\left(e \in Y_i\rightarrow \mbox{val}_i(e)\right) \wedge (e \in Y \leftrightarrow \vee_{i=1}^s{(e \in Y_i)}\wedge \phi(X,Y)\right)}
\]
Notice that we verify that the edges in the set $Y_i$ belong to 
the $i^{\mbox{th}}$ partition and each eadge in $Y$ is in one of the $Y_i$'s.
The fact that the $Y_i$'s form a partition of $Y$ follows from the assumption
that $\mbox{val}_i(e)$ is true for exactly one $i \in [s]$ for any edge $e$.

To obtain the determinant we consider the histogram parameterised on the $s$
variables $Y_1,\ldots,Y_s$ and the heads $X$. For an entry indexed by
$x,y_1,\ldots,y_s$, we multiply the entry by 
$(-1)^{n+x} \prod_{i = 1}^s{{c_i}^{y_i}}$ and take a sum over all entries.
\qed
\end{proof}
Now the proof is completed by observing the following:
\begin{lemma}\label{lem:super}
Given a digraph $G$ there is an easily constructible super-digraph $G'$ such that
the number of rooted spanning trees in both the digraphs is the same and the outdegree of every
vertex in $G$ is either $1$ or $|V(G)|$. In addition if $G$ is bounded tree-width so is $G'$ and
the latter has a tree-decomposition easily obtainable from that of $G$.
\end{lemma}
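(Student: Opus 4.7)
The plan is to augment $G$ with local auxiliary gadgets so that every vertex of $G'$ has outdegree either $1$ or $|V(G)|$. This normalization ensures the Laplacian of $G'$ takes values in the size-four set $\{-1,0,1,|V(G)|\}$; the $(r,r)$-cofactor of this Laplacian, which by Fact~\ref{fact:arb} equals the number of spanning arborescences of $G'$ rooted at $r$, can then be computed in logspace via Lemma~\ref{lem:detsup} applied to the Laplacian submatrix (whose underlying graph remains bounded tree-width, since deleting a single vertex cannot increase tree-width). Equality of the arborescence counts in $G$ and $G'$ is the content of the lemma and must be built into the construction.

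For each $v \in V(G)$ with outdegree $d_v \notin \{1, |V(G)|\}$, I would attach a padding gadget that raises $v$'s outdegree to exactly $|V(G)|$. The gadget introduces $|V(G)|-d_v$ fresh ``padding'' vertices, each sitting on a single edge from $v$ and having a single outgoing edge to a designated sink vertex whose incorporation into any in-arborescence is structurally forced (for instance, the sink has a single edge back to $r$, so its role in any in-arborescence is predetermined). The gadget is engineered so that every in-arborescence of $G'$ rooted at $r$ projects bijectively onto an in-arborescence of $G$ rooted at $r$: the padding edges, if used at all, are forced into a unique canonical completion in any valid arborescence, contributing no spurious multiplicity to the count.

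For the tree-width claim, each padding gadget is spatially local to its host $v$: inserting the gadget's fresh vertices into the bags that contain $v$, and adjoining the single shared sink to every bag, blows up each bag by only a constant. Since adding one vertex to every bag increases the width by exactly one, the tree-width of $G'$ remains bounded. The resulting tree-decomposition of $G'$ is produced from that of $G$ in logspace by a single uniform pass over the bags, using the fact that the identity of each vertex's padding gadget is determined by $v$ alone.

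The main obstacle is designing the padding gadget so that the three constraints---outdegree in $\{1, |V(G)|\}$ at every vertex of $G'$, exact preservation of the arborescence count, and bounded tree-width---hold simultaneously. The count-preservation constraint is the most delicate: one must verify that the padding edges and the sink are arranged so that in every valid in-arborescence they are forced into exactly one configuration, contributing a multiplicative factor of one rather than spuriously multiplying the arborescence count. The outdegree constraint at the padding vertices and the sink, combined with the rigid structure of in-arborescences (each non-root vertex has exactly one outgoing edge in the arborescence), is what makes this forcing work.
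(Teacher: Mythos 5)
Your gadget does not preserve the arborescence count, which is the one property the lemma cannot do without. With the out-degree Laplacian of Fact~\ref{fact:arb}, the relevant objects are in-arborescences rooted at $r$: every non-root vertex has exactly one outgoing tree edge and a directed path to $r$. In your construction a host $v$ may legally choose a padding edge $(v,p)$ as its unique outgoing tree edge, since $v \to p \to \mathrm{sink} \to r$ is an acyclic route to the root; this gives $v$ up to $|V(G)|-d_v$ spurious choices and inflates the count, so the ``unique canonical completion'' you hope for does not materialize with the gadget as specified. The paper's construction avoids this by making the unique outgoing edge of each padding vertex $v_i$ point \emph{back to its host}: the edges added are $(v,v_i)$ and $(v_i,v)$, so $v_i$'s tree edge is forced to be $(v_i,v)$, and $v$ can never select $(v,v_i)$ without closing the $2$-cycle $v \to v_i \to v$. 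Hence $v$ must pick an original edge of $G$ and the bijection between arborescences of $G'$ and of $G$ is immediate; no global sink is needed.

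There is a second gap in the tree-width argument. You propose inserting all of a gadget's fresh vertices into the bags containing $v$, but a single host may receive up to $|V(G)|-1$ padding vertices, so this blows the bag size up by $\Theta(n)$, not by a constant. The correct (and the paper's) move is to hang a \emph{new} pendant bag $\{v, v_i\}$ off an existing bag containing $v$, one per padding vertex; this keeps the width unchanged and is trivially computable from the decomposition of $G$. (Your idea of adding one shared sink to every bag would only cost $+1$ in width, but once the sink is removed from the construction this is moot.)
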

\begin{proof}
The construction of $G'$ is simple for every vertex $v$ of $G$ which has out-degree $od(v)$, add
$k = n - od(v)$ many new vertices $v_1,v_2,\ldots,v_k$ in $G'$ and ensure that the edges $(v_i,v),(v,v_i)$
are added also. Then $od_{G'}(v_i) = 1$ and $od_{G'}(v) = n$ in $G'$. The tree-width of $G'$ is clearly
same as that of $G$ merely by introducing new bags containing $v,v_i$ for each $i \in [k]$.\qed
\end{proof}
The previous fact implies that the Laplacian matrix of $G'$ has entries
from the set $S = \{0,1,-1, |V(G)|, -|V(G)|\}$.

\subsection{Counting Euler Tours in Directed Bounded Treewidth Graphs}
The BEST Theorem states:
\begin{fact}\cite{BE87}\cite{ST41}\label{fact:best}
The number of Euler Tours in a directed Eulerian graph $K$ is exactly:
\[
t(K)\prod_{v \in V}{(\mbox{deg}(v) - 1)!}
\]
where $t(K)$ is the number of arborescences in $K$ rooted at an arbitrary
vertex of $K$ and $\mbox{deg}(v)$ is the indegree as well as the outdegree
of the vertex $v$.
\end{fact}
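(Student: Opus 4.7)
The plan is to prove Fact~\ref{fact:best} via the classical bijection of van Aardenne-Ehrenfest and de Bruijn between Euler tours and arborescence-plus-ordering data. Fix a vertex $r \in V$ and a distinguished out-edge $e_0 = (r, s_0)$ at $r$. I would establish a bijection between Euler tours of $K$ whose first edge is $e_0$ and pairs $(A, \pi)$, where $A$ is an in-arborescence rooted at $r$ (each $v \neq r$ has exactly one outgoing $A$-edge, and following $A$-edges from any vertex eventually reaches $r$), and $\pi$ assigns to each vertex a linear ordering of its out-edges, subject to $e_0$ being first at $r$ and $A(v)$ being last at every $v \neq r$. Given the bijection, the count is immediate: there are $t(K)$ arborescences (by the matrix-tree theorem this is the same for every root in an Eulerian digraph), $(\deg(r)-1)!$ admissible orderings at $r$, and $(\deg(v)-1)!$ at each $v \neq r$, yielding $t(K)\prod_v(\deg(v)-1)!$ Euler tours beginning with $e_0$; since every Euler tour has a unique first edge, this matches the formula.

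The forward map sends an Euler tour $T$ to the pair obtained by letting $A(v)$ be the last out-edge of $v$ used by $T$ and $\pi_v$ record the time-order of out-edges at $v$. The real content is showing $A$ is acyclic: suppose $v_1 \to v_2 \to \cdots \to v_k \to v_1$ were a cycle in $A$ and take $A(v_1)$ to be the earliest-used $A$-edge on the cycle. The ``$A$-edge last'' property and the Eulerian balance of in- and out-degree along $T$ force the tour, after traversing $A(v_1)$, to leave $v_2$ via $A(v_2)$, then $v_3$ via $A(v_3)$, and so on around the cycle, returning finally to $v_1$ with every out-edge of $v_1$ already used, contradicting the tour being stuck before completing.

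The reverse map is the greedy walk: start at $r$, and on each visit to a vertex $v$ depart along the earliest unused edge of $\pi_v$. The main obstacle, and truly the heart of the theorem, is showing this walk uses every edge before halting. A standard balance argument shows the walk can only halt at $r$: at every other vertex arrivals equal departures, so being stuck elsewhere would require a strict excess of arrivals over departures. Suppose for contradiction the walk halts at $r$ with some edges still unused, and let $U \subseteq V \setminus \{r\}$ be the vertices whose $A$-edge was not used. If $U = \emptyset$, then every $v \neq r$ had $A(v)$ used, and since $A(v)$ sits last in $\pi_v$ all out-edges at $v$ were used, leaving no unused edges anywhere — contradiction. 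Otherwise, pick $v \in U$ closest to $r$ along $A$; its $A$-successor $v'$ lies outside $U$, so $A(v')$ was used while $A(v) = (v, v')$ was not, producing at termination a strict excess of used out-edges over used in-edges at $v'$, which is impossible by the same balance argument.
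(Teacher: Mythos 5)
The paper does not prove Fact~\ref{fact:best} at all: it is imported as a known result with citations to \cite{BE87} and \cite{ST41}, so there is no in-paper argument to compare yours against. What you have written is the standard van Aardenne-Ehrenfest--de Bruijn proof, and it is essentially sound: the last-exit edges form an in-arborescence because the last-exit times strictly increase along $A$-edges (your cycle argument is this in contrapositive form), and the greedy-walk inverse together with the balance argument correctly shows surjectivity. Three small points deserve tightening. First, in the acyclicity step, the tour does not leave $v_2$ \emph{via} $A(v_2)$ immediately after traversing $A(v_1)$; what you need (and what your argument really uses) is only that the last exit from $v_2$ occurs strictly later than the traversal of $A(v_1)$. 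Second, in the surjectivity step the $A$-successor $v'$ of the chosen $v\in U$ may equal $r$, in which case ``$A(v')$ was used'' is vacuous; the argument still goes through because all out-edges of $r$ are used when the walk halts there, but you should say so. Third, you should state explicitly the counting convention --- the formula counts tours up to cyclic rotation, equivalently tours with a fixed first edge $e_0$, and each rotation class contains exactly one representative starting with $e_0$ --- and note (or at least assert) that the two maps are mutually inverse, which is what lets you multiply $t(K)$ by $\prod_v(\deg(v)-1)!$. With those repairs this is a complete and correct proof of the cited fact.
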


We combine Facts \ref{fact:arb} and \ref{fact:best}with Theorem~\ref{thm:btwDet}
to compute the number of directed Euler Tours in a directed Eulerian graph.
Combining Lemmas \ref{lem:detsup} and \ref{lem:super}  with the Theorem~\ref{thm:btwDet} and 
the fact that iterated integer sum and product is in \Log \cite{HAB}, we get 
that:
\begin{theorem}\label{thm:btwDirEuler}
Counting the number of directed Euler Tours in a directed Eulerian graph $G$
is in \Log.
\end{theorem}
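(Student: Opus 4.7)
The plan is to combine the BEST theorem with the Matrix Tree theorem and the previously developed determinant machinery. By Fact \ref{fact:best}, the number of directed Euler tours in an Eulerian digraph $G$ equals
\[
t(G) \prod_{v \in V(G)} (\deg(v) - 1)!,
\]
so I will compute each of the two factors separately in \Log\ and then multiply them, using the fact that iterated integer product is in \Log\ \cite{HAB}.

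For the factorial product, each factorial $(\deg(v)-1)!$ is itself an iterated product over an interval of integers of polynomial magnitude, and the overall product ranges over the vertex set; both levels of product lie in \Log\ by \cite{HAB}. For the arborescence count $t(G)$, I invoke Fact \ref{fact:arb}: it is a cofactor (that is, a principal minor) of the Laplacian $L = D - A$ of $G$. The obstacle is that $L$ has diagonal entries equal to the out-degrees of the vertices, so in general the matrix uses $\Omega(n)$ distinct values, which prevents a direct appeal to Lemma \ref{lem:detsup}. To bypass this, I first apply Lemma \ref{lem:super} to obtain a super-digraph $G'$ such that (i) $t(G') = t(G)$, (ii) $G'$ has bounded tree-width with an explicit tree decomposition computable from that of $G$, and (iii) every vertex of $G'$ has out-degree either $1$ or $|V(G)|$. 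Consequently the Laplacian $L'$ of $G'$ has entries drawn from the $5$-element set $S = \{0, 1, -1, |V(G)|, -|V(G)|\}$, whose size is an absolute constant independent of the input.

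Next I fix any vertex $r$ of $G'$ and consider the principal minor $L'_r$ of $L'$ obtained by deleting the row and column indexed by $r$. Deleting a vertex from a graph (and the associated row/column from its Laplacian-type matrix) can only decrease the tree-width, and the resulting tree decomposition is obtained in logspace from that of $G'$ by removing $r$ from each bag. The underlying digraph of $L'_r$ (with an edge wherever $L'_r$ has a nonzero, including self-loops coming from the diagonal) still has bounded tree-width, so Lemma \ref{lem:detsup} applies and the determinant $\det(L'_r) = t(G') = t(G)$ is computable in \Log.

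Putting these pieces together, I compute $t(G)$ by running the procedure of Lemma \ref{lem:detsup} on $L'_r$, compute $\prod_v (\deg(v)-1)!$ by iterated product via \cite{HAB}, and multiply the two results, again via \cite{HAB}. Each stage is in \Log, and since \Log\ is closed under composition through oracle access to logspace functions (equivalently, by piping the bits of the intermediate quantities), the whole computation stays in \Log. The only real technical point to verify carefully is the one flagged above, namely that the entry-set of the Laplacian after the super-digraph construction is of constant cardinality, which is precisely the content of Lemma \ref{lem:super} combined with Lemma \ref{lem:detsup}.
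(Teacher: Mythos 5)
Your proposal is correct and follows essentially the same route as the paper: BEST theorem plus the Matrix Tree theorem, with Lemma \ref{lem:super} reducing the Laplacian's entry set to constant size so that Lemma \ref{lem:detsup} applies, and \cite{HAB} handling the iterated products. You actually supply more detail than the paper does (e.g.\ the observation that deleting the row/column for the cofactor preserves bounded tree-width), but the underlying argument is the same.
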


\subsection{Counting Euler tours in undirected bounded treewidth graphs}
There is no closed form expression for the number of Euler Tours in undirected
graphs unlike the BEST formula for directed graphs. However, notice that
every undirected tour corresponds to an orientation of the undirected
graph in the following sense. Suppose we orient the edges of the tour so that
the first edge is the lex-least edge (i.e. from the smallest numbered vertex to 
its least numbered neighbour), then this induces an orientation of the edges 
of the undirected graph. Also given any directed tour of any orientation of the
undirected graph the corresponding undirected tour imposes an orientation which
is either the given orientation (if the lex-least edge is oriented according
to the orientation) or its reverse (if the lex-leats edge is oriented according
tot he reverse of the orientation).

The long and short of it is that
\begin{obs} \label{obs:undirDir} The number of undirected tours is 
half of the sum of the directed tours over all Eulerian-orientations of the
graph. 
\end{obs}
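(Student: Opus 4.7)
The plan is to set up a $2$-to-$1$ correspondence between directed Euler tours (taken over all Eulerian orientations of $G$) and undirected Euler tours of $G$. The forward map $\Phi$ is the obvious one: given an Eulerian orientation $\sigma$ of $G$ and a directed Euler tour $\tau$ of $(G,\sigma)$, let $\Phi(\sigma,\tau)$ be the undirected tour obtained by forgetting the orientation of each edge. This is clearly an undirected Euler tour of $G$.

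To show $\Phi$ is $2$-to-$1$, I would start from an undirected Euler tour $\tau'$ and construct its preimages explicitly. Traversing $\tau'$ starting from its lex-least edge (as in the text preceding the observation) in one direction orients each edge in the direction in which it is traversed; this yields an orientation $\sigma$ of $G$, and the traversal itself is then a directed Euler tour $\tau$ of $(G,\sigma)$. Because $\tau'$ enters and leaves each vertex the same number of times, $\sigma$ is Eulerian. Traversing $\tau'$ in the opposite direction produces the reverse orientation $\sigma^{R}$ (which is also Eulerian) and the reversed directed tour $\tau^{R}$. Thus $(\sigma,\tau)$ and $(\sigma^{R},\tau^{R})$ are two preimages of $\tau'$ under $\Phi$. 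Any preimage must arise this way, since the cyclic sequence of edges underlying a directed tour mapping to $\tau'$ must equal that of $\tau'$, up to the choice of the two traversal directions.

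The only subtle point is to verify that the two preimages are genuinely distinct, i.e.\ that $\sigma\ne\sigma^{R}$. This holds for any graph with at least one edge, because no individual edge can be equal to its own reverse; hence $\sigma$ and $\sigma^{R}$ disagree on every edge. Consequently $(\sigma,\tau)$ and $(\sigma^{R},\tau^{R})$ live over different orientations and cannot coincide. The main obstacle, such as it is, is therefore purely a matter of fixing conventions for what counts as ``the same'' Euler tour (cyclic sequence of edges vs.\ starting-point-rooted sequence) consistently between the directed and undirected cases; the canonical choice of the lex-least edge as the starting point accomplishes this.

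Summing the $2$-to-$1$ correspondence over all Eulerian orientations yields
\[
\#\{\text{undirected Euler tours of }G\} \;=\; \tfrac{1}{2}\sum_{\sigma}\#\{\text{directed Euler tours of }(G,\sigma)\},
\]
where $\sigma$ ranges over Eulerian orientations of $G$, which is exactly the claim of Observation~\ref{obs:undirDir}.
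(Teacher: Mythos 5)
Your proposal is correct and follows essentially the same route as the paper: the text immediately preceding the observation sets up exactly this $2$-to-$1$ correspondence, noting that each undirected tour induces an orientation via its lex-least starting edge and that any directed tour over any Eulerian orientation projects to an undirected tour whose induced orientation is either the given one or its reverse. Your write-up merely makes explicit the verification that the two preimages are distinct, which the paper leaves implicit.
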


Let $G = (V,E)$ be an undirected graph and let $G'$ denote the graph obtained by
subdividing every edge by introducing a new vertex $v_e$ for every 
$e \in E$ and replacing the edge $e = \{u,v\}$ by the two edges $\{u,v_e\},
\{v,v_e\}$. Notice the following bijection:
\begin{obs}
The orientations of $G$ are in bijection with functions that pick for each
vertex $v_e$ of $V(G')$ one of the two edges incident on it. In particular, 
Eulerian orientations correspond to functions such that the edges picked at a 
vertex $v \in V$ are exactly equal to the edges not picked at $v$.
\end{obs}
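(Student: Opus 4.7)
The plan is to unwind the definitions of orientation and edge-subdivision to exhibit the claimed bijection, and then check that the Eulerian condition translates as stated. First I would fix notation: for $e = \{u,v\} \in E$, the subdivision vertex $v_e \in V(G') \setminus V(G)$ has exactly two incident edges in $G'$, namely $\{u, v_e\}$ and $\{v, v_e\}$. Since there are exactly two orientations of $e$ and exactly two choices of an incident edge at $v_e$, it suffices to specify a rule pairing them consistently across all $e \in E$ and to verify that the rule sends Eulerian orientations to the asserted class of picking functions.

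The rule I would use is: given an orientation of $G$, define $f(v_e)$ to be the edge joining the tail of $e$ to $v_e$; conversely, given a picking function $f$, declare the tail of $e$ to be the unique endpoint $w \in \{u,v\}$ with $f(v_e) = \{w, v_e\}$. These two maps are manifestly inverse to each other, which settles the first half of the observation.

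For the second half, fix $v \in V$ and partition the $\deg_G(v)$ edges of $G'$ incident to $v$ into the ``picked'' set $P_v = \{\{v, v_e\} : e \ni v,\ f(v_e) = \{v, v_e\}\}$ and the complementary ``unpicked'' set $U_v$. Under the bijection, $|P_v|$ equals the out-degree of $v$ in the orientation of $G$ (since $P_v$ records precisely those $e$ for which $v$ is the tail), while $|U_v|$ equals the in-degree of $v$. The orientation is Eulerian at $v$ iff $|P_v| = |U_v|$, which is the intended reading of the phrase ``the edges picked at $v$ are exactly equal to the edges not picked at $v$''.

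The one mildly subtle point, which I expect to be the only place a reader could trip, is the interpretation of that phrase: since $P_v$ and $U_v$ are disjoint by construction, they cannot coincide as sets unless both are empty, so the claim must be read as an equality of cardinalities (equivalently, as a balanced partition of the incident edges at $v$). With this convention in place the argument is a direct unwinding of definitions, and no genuine obstacle arises.
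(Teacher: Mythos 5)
Your proposal is correct and matches the paper's argument: the paper uses the identical convention (orient $e=\{u,v\}$ as $(u,v)$ iff $\{u,v_e\}$ is picked), and your explicit verification of the Eulerian condition as a cardinality balance at each $v$ is just a fuller write-up of what the paper leaves implicit. No issues.
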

To see this we just associate the direction $(u,v)$ to $\{u,v\}$ iff
the edge $\{u,v_e\}$ is picked in the set. Under this bijection, a directed
cycle in an oriented version of $G$ consists of a cycle in $G'$ in which
alternating edges are picked by the function above.

Also notice that $G'$ has the same number of Euler tours as $G$ and also
has bounded treewidth.

With all this in mind we can write an MSO-formula,
$\phi(X,Z)$ where $Z$ is the set of edges picked above to indicate an 
orientation. The formula is similar to the one in Lemma~\ref{lem:phiX},
replacing directed cycles/paths and alternating ones (in the sense above).
Notice we don't have to stipulate anything special such as per-vertex
cardinality constraints on $Z$ - this is guaranteed by the rest of the
formula.

Now, the following theorem is obvious in light of 
Observation~\ref{obs:undirDir}:
\begin{theorem}\label{thm:btwUndirEuler}
Counting the number of Euler Tours in an undirected bounded tree-width Eulerian graph $K$
is in \Log.
\end{theorem}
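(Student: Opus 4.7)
The plan is to combine Observation~\ref{obs:undirDir} with the BEST theorem (Fact~\ref{fact:best}) to reduce the undirected count to a single MSO counting task on a bounded tree-width structure. Observation~\ref{obs:undirDir} gives $ET_u(K) = \frac{1}{2}\sum_{\sigma\in\mathcal{E}(K)} ET_d(K_\sigma)$, where $\mathcal{E}(K)$ denotes the Eulerian orientations of $K$. For each $\sigma\in\mathcal{E}(K)$, BEST yields $ET_d(K_\sigma) = t(K_\sigma)\cdot C$, where $C:=\prod_{v\in V(K)} (\deg_K(v)/2-1)!$ is independent of $\sigma$ because the Eulerian condition forces $d^+_\sigma(v)=d^-_\sigma(v)=\deg_K(v)/2$. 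Hence $ET_u(K) = (C/2)\cdot N$ with $N := \sum_{\sigma\in\mathcal{E}(K)} t(K_\sigma)$. The constant $C$ is computable in \Log{} by iterated integer product (\cite{HAB}), so it suffices to compute $N$ in \Log.

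I would express $N$ as a count of MSO-satisfying assignments on the subdivision graph $G'=K'$, which has bounded tree-width and a log-space computable tree decomposition derived from that of $K$ (as described just before the theorem statement). An orientation $\sigma$ is encoded by a subset $Z\subseteq E(G')$ selecting exactly one of the two edges at each subdivision vertex $v_e$; Eulerianness of $\sigma$ becomes a per-original-vertex balance condition on $Z$; and a spanning $r$-arborescence of $K_\sigma$ (for any fixed root $r\in V(K)$) is encoded by a subset $W\subseteq Z$ such that every non-root vertex of $V(K)$ has a unique outgoing $W$-edge in the orientation induced by $Z$, and $W$ is acyclic in $K_\sigma$. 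These structural constraints are all standard $\mathsf{MSO}_2$-definable properties on $G'$. Collecting them into a single formula $\psi(Z,W)$ and applying the counting version of Courcelle's theorem from \cite{EJT} (as used in Theorems~\ref{thm:btwDet} and \ref{thm:btwDirEuler}) yields $N$ in \Log, after which $ET_u(K)=CN/2$ is obtained by log-space integer arithmetic.

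The principal obstacle is enforcing the Eulerian balance at each original vertex of $K$: since the degrees in $K$ are not bounded by the tree-width, a direct per-vertex cardinality constraint on $Z$ is not $\mathsf{MSO}_2$-expressible in isolation. As the excerpt notes just after Observation~\ref{obs:undirDir} (``we don't have to stipulate anything special such as per-vertex cardinality constraints on $Z$''), the balance is forced automatically once the remainder of the formula faithfully captures the alternating-cycle/arborescence structure on $G'$ that corresponds to the directed-Eulerian structure on $K_\sigma$. Should the chosen encoding need reinforcing, one can existentially quantify in $\mathsf{MSO}_2$ over an auxiliary set witnessing a bijection between the $Z$-edges and the non-$Z$-edges incident to each original vertex, using the ION-based total order from Section~\ref{sec:btwDet} to canonically address edge pairs; such a bijection exists iff the Eulerian balance holds at that vertex, and its presence does not increase the tree-width by more than a constant factor.
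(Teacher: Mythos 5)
Your overall architecture is the same as the paper's: Observation~\ref{obs:undirDir}, the BEST theorem (Fact~\ref{fact:best}) with the correct remark that $C=\prod_{v}(\deg_K(v)/2-1)!$ is the same for every Eulerian orientation, the subdivision graph $G'$ to encode orientations as edge sets $Z$, and the counting version of \cite{EJT} to extract $N=\sum_{\sigma}t(K_\sigma)$. Your one departure --- defining the arborescence directly as an $\mathsf{MSO}_2$ set $W\subseteq Z$ (unique outgoing $W$-edge per non-root vertex plus acyclicity) rather than routing through the Laplacian determinant via Fact~\ref{fact:arb} and Lemmas~\ref{lem:detsup} and~\ref{lem:super} --- is legitimate and arguably cleaner, since arborescences are themselves $\mathsf{MSO}_2$-definable objects.

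The genuine gap is exactly where you point: restricting the count to \emph{Eulerian} orientations, and neither of your remedies closes it. With your encoding the restriction is not optional: an in-arborescence rooted at $r$ exists in many non-Eulerian orientations (any orientation in which every vertex can reach $r$), so the histogram of $\psi(Z,W)$ sums $t(K_\sigma)$ over \emph{all} orientations and $CN/2$ strictly overcounts; moreover BEST only applies to Eulerian digraphs, so the spurious terms are not even zero. Remedy (a) cannot be inherited from the paper: the paper's claim that no per-vertex cardinality constraint is needed rests on its formula retaining the alternating-cycle-cover structure from Lemma~\ref{lem:thetaX}, which you have discarded; in your formula the ``rest'' is only the arborescence condition, which constrains just one $Z$-edge per vertex and says nothing about the remaining $\deg_K(v)-1$ incident edges. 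Remedy (b) is unsound: asking for a bijection between the $Z$-edges and the non-$Z$-edges at a vertex of unbounded degree is an equicardinality condition, which is not $\mathsf{MSO}$-definable even with the $ION^{*}$ total order available (over a linear order $\mathsf{MSO}$ captures exactly the regular languages, and ``equally many picked and unpicked'' is not regular), and $\mathsf{MSO}_2$ has no quantification over sets of edge-pairs with which to name such a matching. To repair the argument you must enforce balance by some structural device --- e.g.\ requiring that \emph{all} of $E(G')$ decompose into closed walks that are consistently $Z$-directed, which forces in-degree to equal out-degree at every vertex --- and then argue that this device is itself $\mathsf{MSO}_2$-expressible; as written, $\psi(Z,W)$ computes the wrong $N$.
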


\section{Conclusion and Open Ends}\label{sec:concl}
We show that the Logspace version of Courcelle's theorem can be used to
compute the Determinant of a $\{0,1\}$-matrix of bounded tree-width in \Log.
With some more work we are able to show some problems in linear algebra
such as characteristic polynomial, rank and trace of powers of binary matrices
is also in \Log. We are also able to show that counting graph theoretic structures
such as spanning trees and Euler tours in both graphs and digraphs is in \Log.

Many open questions remain:
\begin{itemize}
\item What is the complexity of the determinant of a matrix with arbitrary weights
whose support is a bounded tree-width graph?
\item What is the complexity of powering a matrix of bounded tree-width?
\item More generally, what is the complexity of iterated product of a bounded
tree-width matrix?
\item What is the complexity of other linear algebraic invariants such as
minimal polynomial of a bounded tree-width matrix?
\item Can the algorithms we design be made less ``galactic'' in terms of dependence
on the tree-width? In other words, can we eliminate the use of Courcelle's 
theorem in these algorithms (\cite{CCMtw} is a step in this direction)?
\end{itemize}

\section*{Acknowledgement}
We would like to thank Abhishek Bhrushundi, Johannes K\"obler, Sebastian Kuhnert,
Raghav Kulkarni, Arne Meier and Heribert Vollmer for illuminating discussions
regarding this paper.

\bibliographystyle{alpha}	
\bibliography{skeleton}
\end{document}